\def\qed{\hfill \vrule height 7pt width 7pt depth 0pt\medskip}
\def\beq{\begin{equation}}
\def\eeq{\end{equation}}
\def\proof{\noindent{\bf Proof}\ \ }
\newcommand{\ds}{\displaystyle}
\newcommand{\ba}{\begin{array}}
\newcommand{\ea}{\end{array}}
\renewcommand{\l}{\left}\renewcommand{\r}{\right}
\newcommand{\be}{\begin{equation}}
\newcommand{\ee}{\end{equation}}
\newcommand{\eps}{\varepsilon}
\newcommand{\1}{\mathbbm{1}}
\newcommand{\R}{\mathbb{R}}
\newcommand{\de}{\mathrm{d}}
\newcommand{\se}{\text{ if }}
\DeclareMathOperator{\cl}{cl}
\DeclareMathOperator*{\argmin}{argmin}
\newcommand{\real}{{\mathbb{R}}}
\newcommand{\realnoneg}{\R_+}
\newcommand{\subscr}[2]{{#1}_{\textup{#2}}}
\newcommand{\supscr}[2]{{#1}^{\textup{#2}}}
\newcommand{\union}{\cup}
\newcommand{\until}[1]{\{1,\dots,#1\}}
\newcommand{\mc}{\mathcal}
\newcommand{\densitymax}{\supscr{\rho}{max}}
\newcommand{\flowmax}{\supscr{f}{max}}
\newcommand{\flowmaxe}{\subscr{f}{e}^\textup{max}}
\newcommand{\newflowmaxe}{\subscr{\tilde{f}}{e}^\textup{max}}
\newcommand{\minimize}{\text{minimize}}
\newcommand{\subj}{\text{subj. to}}
\newcommand{\flow}{f}
\newcommand{\floweq}{{\flow}^{\text{*}}}
\newcommand{\floweqwardrop}{{\flow}^{\text{W}}}
\newcommand{\rhoeq}{{\rho}^{\text{*}}}
\newcommand{\floweqe}{\supscr{\flow}{*}_e}
\newcommand{\densitymaxe}{\supscr{\rho}{max}_e}
\newcommand{\priceofanarchy}{P}
\newcommand{\graph}{\mc T}
\newcommand{\edgeset}{\mc E}
\newcommand{\paths}{\mc P}
\newcommand{\path}{p}
\newcommand{\delayfunc}{T}
\newcommand{\flowsymb}{f}
\newcommand{\toll}{\Upsilon}
\newcommand{\onebf}{\mathbf{1}}
\newcommand{\zerobf}{\mathbf{0}}
\newcommand{\ilogitconst}{\eta}
\newtheorem{assumption}{Assumption}
\newtheorem{theorem}{Theorem}
\newtheorem{proposition}{Proposition}
\newtheorem{corollary}{Corollary}
\newtheorem{definition}{Definition}
\newtheorem{lemma}{Lemma}
\newtheorem{remark}{Remark}
\newtheorem{example}{Example}
\newcommand\oprocendsymbol{\hbox{$\square$}}
\newcommand\oprocend{\relax\ifmmode\else\unskip\hfill\fi\oprocendsymbol}
\begin{document}

\title{Robust Distributed Routing in Dynamical Flow Networks -- Part II: Strong Resilience, Equilibrium Selection and Cascaded Failures}

\author{Giacomo Como\thanks{G.~Como, K.~Savla, M.A.~Dahleh and E.~Frazzoli are with the Laboratory for Information and Decision Systems at the Massachusetts Institute of Technology. \texttt{\{giacomo,ksavla,dahleh,frazzoli\}@mit.edu.}} \quad Ketan Savla \quad Daron Acemoglu\thanks{D.~Acemoglu is with the Department of Economics at the Massachusetts Institute of Technology. \texttt{daron@mit.edu}.} \quad Munther A. Dahleh \quad Emilio Frazzoli \thanks{This work was supported in part by NSF EFRI-ARES grant number 0735956. Any opinions, findings, and conclusions or recommendations expressed in this publication are those of the authors and do not necessarily reflect the views of the supporting organizations. G. Como and K. Savla thank Prof. Devavrat Shah for helpful discussions.
A preliminary version of this paper appeared in part as \cite{Como.Savla.ea:MTNS10}.
}\\\vspace{-40pt}} 

\maketitle
\begin{abstract}
Strong resilience properties of dynamical flow networks are analyzed for distributed routing policies. The latter are characterized by the property that the way the inflow at a non-destination node gets split among its outgoing links is allowed to depend only on local information about the current particle densities on the outgoing links. The strong resilience of the network is defined as the infimum sum of link-wise flow capacity reductions under which the network cannot maintain the asymptotic total inflow to the destination node to be equal to the inflow at the origin. A class of distributed routing policies that are locally responsive to local information is shown to yield the maximum possible strong resilience under such local information constraints for an acyclic dynamical flow network with a single origin-destination pair. The maximal strong resilience achievable is shown to be equal to the minimum node residual capacity of the network. The latter depends on the limit flow of the unperturbed network and is defined as the minimum, among all the non-destination nodes, of the sum, over all the links outgoing from the node, of the differences between the maximum flow capacity and the limit flow of the unperturbed network. We propose a simple convex optimization problem to solve for equilibrium limit flows of the unperturbed network that minimize average delay subject to strong resilience guarantees, and discuss the use of tolls to induce such an equilibrium limit flow in transportation networks. Finally, we present illustrative simulations to discuss the connection between cascaded failures and the resilience properties of the network.

\end{abstract}

\textbf{Index terms:} dynamical flow networks, distributed routing policies, strong resilience, price of anarchy, cascaded failures.

\section{Introduction}
Robustness of routing policies for flow networks is a central problem which is gaining increased attention with a growing awareness to safeguard critical infrastructure networks against natural and man-induced disruptions. Information constraints limit the efficiency and resilience of such routing policies, and the possibility of cascaded failures through the network adds serious challenges to this problem. The difficulty is further magnified by the presence of dynamical effects \cite{Simonsen.Buzna.ea:08}. 

This paper considers the framework of \emph{dynamical flow networks} introduced in our companion paper~\cite{PartI}, where the network is modeled by a system of ordinary differential equations derived from mass conservation laws on directed acyclic graphs with a single origin-destination pair and a constant inflow at the origin. The rate of change of the particle density on each link of the network equals the difference between the \emph{inflow} and the \emph{outflow} on that link. The latter is modeled to depend on the current particle density on that link through a \emph{flow function}. 
We focus on \emph{distributed routing policies} whereby the proportion of incoming flow routed to the outgoing links of a node is allowed to depend only on \emph{local information}, consisting of the current particle densities on the outgoing links of the same node. We call the dynamical flow network \emph{fully transferring} if the outflow at the destination node asymptotically approaches the inflow at the origin node.
Our primary objective in this paper is to analyze the robustness of distributed routing policies in terms of the network's \emph{strong resilience}, which is defined as the infimum sum of link-wise magnitude of disturbances making the perturbed dynamical flow network not fully transferring. 

We prove that the maximum possible strong resilience is yielded by a class of \emph{locally responsive} distributed routing policies, introduced in the companion paper~\cite{PartI}. Such policies are characterized by the property that the portion of its inflow that a node routes towards an outgoing link does not decrease as the particle density on any other outgoing link increases. We show that the strong resilience of a dynamical flow network with such locally responsive distributed routing policies equals the \emph{minimum node residual capacity}. The latter is defined as the minimum, among all the non-destination nodes, of the sum of the difference between the maximum flow capacity and the limit flow of the unperturbed network, on all the links outgoing from the node. 
Using idea from \cite{Como.Savla.ea:Wardrop-arxiv}, one can show that, when the information constraints on the routing policies are relaxed, i.e., the routing policies can access information about the particle densities over the whole network, then the strong resilience of the network is equal to the network residual capacity. The latter is defined as the difference between the min-cut capacity of the network and rate of arrival at the origin node. Since the minimum node residual capacity is in general less than the network residual capacity, this shows that the information constraints on the routing policies reduce the strong resilience of the network. Moreover, the minimum residual capacity depends on the limit flow of the unperturbed network. This is in stark contrast to our result on weak resilience in \cite{PartI}, where we showed that the weak resilience is unaffected by local information constraints on the routing policies and is independent of the limit flow of the unperturbed network.
We also propose a simple convex optimization problem to solve for equilibrium limit flows of the unperturbed network that minimize average delay subject to strong resilience guarantees, and discuss the use of tolls to induce such an equilibrium limit flow in transportation networks. These results are derived under the condition when the link-wise flow functions are strictly increasing and the links have unbounded capacity for flow densities. 
We present illustrative simulations discussing cascaded failures that arise when the links have finite capacities on flows as well as densities. It is noteworthy that, we not only describe cascaded failures within a dynamical flow network framework and formalize their effect by establishing the connection to our notions of network resilience, but also highlight the role of distributed routing policies in averting such failures.

Stability analysis of network flow control policies under various routing policies is carried out in \cite{Sengoku.Shinoda.ea:88,Tassiulas.Ephremides:92,Low.Paganini.ea:02}. A detailed comparison between the settings of these papers and our dynamical flow network setting is included in the companion paper~\cite{PartI}.
This paper also studies the connection between the robustness properties of the network and its equilibrium flow. 
The role of equilibrium in the efficiency of a system, especially in economic settings involving multiple agents, has attracted a lot of attention, e.g., see \cite{Dubey:86}. One of the most celebrated notions to measure the inefficiency of an equilibrium is the \emph{price of anarchy}~\cite{Roughgarden:05}. In a transportation setting, the price of anarchy of a given network state quantifies the extent to which the average delay faced by a driver at that state exceeds the least possible average delay over all network states. In this paper, we propose a robustness-based metric for measuring inefficiency of equilibrium states of dynamical flow networks. 
Finally, the study of cascaded failure for complex networks has attracted a great deal of attention recently, e.g., see \cite{Motter.Lai:02,Crucitti.Latora.ea:04} where the authors propose various models to explain this phenomenon.

The contributions of this paper are as follows: (i) we formulate the notion of strong resilience of a dynamical flow network, and show that the class of locally responsive routing policies yield the maximum strong resilience under local information constraint; (ii) we formulate a simple convex optimization problem to solve for the most robust equilibrium flow, and discuss the use of tolls in implementing such an equilibrium in transportation networks; and (iii) we present illustrative simulations to discuss cascaded failures in dynamical flow networks and their effect on network resilience.

The rest of the paper is organized as follows. In Section~\ref{sec:model}, we briefly summarize the dynamical flow network framework and the postulate the notion of strong resilience. In Section~\ref{sec:perturbations}, we state the main result on the strong resilience, and provide discussions on the results. Section~\ref{sec:equilibrium-selection} discusses the problem of selection of the most strongly resilient equilibrium flow of the network and the use of tolls to induce such an equilibrium in transportation networks. In Section~\ref{sec:sim}, we report illustrative numerical simulation results, discussing the effect of cascading failures on the resilience of the network. We conclude in Section~\ref{sec:conclusion} with remarks on future research directions and state proofs of the main results in the appendices A and B.

Before proceeding, we define some preliminary notation to be used throughout the paper. Let $\real$ be the set of real numbers, $\realnoneg:=\{x\in\R:\,x\ge0\}$ be the set of nonnegative real numbers. Let $\mc A$ and $\mc B$ be finite sets. Then, $|\mc A|$ will denote the cardinality of $\mc A$, $\R^{\mc A}$ (respectively, $\R_+^{\mc A}$) the  space of real-valued (nonnegative-real-valued) vectors whose components are indexed  by elements of $\mc A$, and $\R^{\mc A\times\mc B}$ the space of matrices whose real entries indexed  by pairs of elements in $\mc A\times\mc B$. The transpose of a matrix $M \in \real^{\mc A \times\mc B}$, will be denoted by $M^T \in\R^{\mc B\times\mc A}$, while $\onebf$ the all-one vector, whose size will be clear from the context. Let $\cl(\mc X)$ be the closure of a set $\mc X\subseteq\R^{\mc A}$. A directed multigraph is the pair $(\mc V,\mc E)$ of a finite set $\mc V$ of nodes, and of a multiset $\mc E$ of links consisting of ordered pairs of nodes (i.e., we allow for parallel links). Given a a multigraph $(\mc V,\mc E)$, for every node $v\in\mc V$, we shall denote by $\mc E^+_v\subseteq\mc E$, and $\mc E^-_v\subseteq\mc E$, the set of its outgoing and incoming links, respectively. Moreover, we shall use the shorthand notation $\mc R_v:=\R_+^{\mc E^+_v}$ for the set of nonnegative-real-valued vectors whose entries are indexed  by elements of $\mc E^+_v$, $\mc S_v:=\{p\in \mc R_v:\,\sum_{e \in \mc E_v^+} p_e=1\}$ for the simplex of probability vectors over $\mc E^+_v$, and $\mc R:=\R_+^{\mc E}$ for the set of nonnegative-real-valued vectors whose entries are indexed  by the links in $\mc E$.

\section{Dynamical flow networks}
\label{sec:model}
The notion of dynamical flow network was introduced in the companion paper~\cite{PartI}. In order to render the present paper self-contained, we introduce here the concepts and notation which are most relevant. We start with the following definition of a flow network.\medskip

\begin{definition}[Flow network]\label{def:flownetwork}
A \emph{flow network} $\mc N=(\mc T,\mu)$ is the pair of a \emph{topology}, described by a finite directed multigraph $\mc T=(\mc V,\mc E)$, where $\mc V$ is the node set and $\mc E$ is the link multiset, and a family of \emph{flow functions} $\mu:=\{\mu_e:\R_+\to\R_+\}_{e\in\mc E}$ describing the functional dependence $f_e=\mu_e(\rho_e)$ of the flow on the density of particles on every link $e\in\mc E$. 
The  \emph{flow capacity} of a link $e\in\mc E$ is  
\be \flowmax_e:=\sup_{\rho_e \geq 0} \mu_e(\rho_e)\,. \ee
\end{definition}\medskip

We shall use the notation $\mc F_v:=\times_{e\in\mc E^+_v}[0,\flowmax_e)$ for the set of admissible flow vectors on outgoing links from node $v$, and $\mc F:=\times_{e\in\mc E}[0,\flowmax_e)$ for the set of admissible flow vectors for the network. We shall write $f:=\{f_e:\,e\in\mc E\}\in\mc F$, and $\rho:=\{\rho_e:\,e\in\mc E\}\in\mc R$, for the vectors of flows and of densities, respectively, on the different links. The notation $f^v:=\{f_e:\,e\in\mc E^+_v\}\in\mc F_v$, and $\rho^v:=\{\rho_e:\,e\in\mc E^+_v\}\in\mc R_v$ will stand for the vectors of flows and densities, respectively, on the outgoing links of a node $v$. We shall compactly denote by $f=\mu(\rho)$ and $f^v=\mu^v(\rho^v)$ the functional relationships between density and flow vectors.

Throughout this paper, we shall restrict ourselves to flow networks satisfying the following assumptions. \medskip
\begin{assumption}\label{ass:acyclicity}
The topology $\mc T$ contains no cycles, has a unique origin (i.e., a node $v\in\mc V$ such that $\mc E^-_v$ is empty), and a unique destination (i.e., a node $v\in\mc V$ such that $\mc E^+_v$ is empty). Moreover, there exists a path in $\mc T$ to the destination node from every other node in $\mc V$. 
\end{assumption}\medskip
\begin{assumption}\label{ass:flowfunction}
For every link $e\in\mc E$, the map $\mu_e:\R_+\to\R_+$ is continuously differentiable, strictly increasing, such that $\mu_e(0)=0$, and $f_e^{\max}<+\infty$. 
\end{assumption}\medskip

In particular, Assumption \ref{ass:acyclicity} implies that (see, e.g., \cite{Cormen.Leiserson:90}) one can identify (in a possibly non-unique way) the node set $\mc V$ with the integer set $\{0,1,\ldots,n\}$, where $n:=|\mc V|-1$, in such a way that 
\be\label{vertexordering}\mc E^-_{v}\subseteq\bigcup\nolimits_{0\le u<v}\mc E^+_{u}\,,\qquad\forall  v=0,\ldots,n\,.\ee
In particular, (\ref{vertexordering}) implies that $0$ is the origin node, and $n$ the destination node in the network topology $\mc T$. An \emph{origin-destination cut}  (see, e.g., \cite{Ahuja.Magnanti.ea:93}) of $\graph$  is a partition of $\mc V$ into $\mc U$ and $\mc V \setminus\mc U$ such that $0 \in \mc U$ and $n \in \mc V \setminus \mc U$. Let $\mc E_{\mc U}^+=\{(u,v)\in\mc E:\,u\in\mc U,v\in\mc V\setminus\mc U\}$ be the set of all the links pointing from some node in $\mc U$ to some node in $\mc V \setminus \mc U$. The \emph{min-cut capacity} of a flow network $\mc N$ is defined as  
\be\label{def:capacity} C (\mc N):=\min_{\mc U} \sum\nolimits_{e \in \mc E_{\mc U}^+} \flowmaxe\,,\ee
where the minimization runs over all the origin-destination cuts of $\mc T$. Throughout this paper, we shall assume a constant inflow $\lambda_0 \ge 0$ at the origin node.
Let us define the set of \emph{admissible equilibrium flows} associated to $\lambda_0$ as  
$$\mc F^*(\lambda_0):=\l\{f^*\in\mc F:\,\sum\nolimits_{e\in\mc E^+_0}f_e^*=\lambda_0,\,\sum\nolimits_{e\in\mc E^+_v}f_e^*=\sum\nolimits_{e\in\mc E^-_v}f_e^*,\,\forall \, 0<v<n\r\}\,.$$
Then, it follows from the max-flow min-cut theorem (see, e.g., \cite{Ahuja.Magnanti.ea:93}), that $\mc F^*(\lambda_0)\ne\emptyset$ whenever $\lambda_0<C(\mc N)$. That is, the min-cut capacity equals the maximum flow that can pass from the origin to the destination while satisfying capacity constraints on the links, and conservation of mass at the intermediate nodes.



We now recall the notion of a distributed routing policy from \cite{PartI}. 

\begin{definition}[Distributed routing policy]\label{def:distributedroutingpolicy}
A \emph{distributed routing policy} for a flow network $\mc N$ is a family of functions $\mc G:=\{G^v:\mc R_v\to\mc S_v \}_{0\le v<n}$ describing the ratio in which the particle flow incoming in each non-destination node $v$ gets split among its outgoing link set $\mc E^+_v$, as a function of the observed current particle density $\rho^v$ on the outgoing links themselves. 
\end{definition}\medskip

The salient feature of Definition \ref{def:distributedroutingpolicy} is that the routing policy $G^v(\rho^v)$ depends only on the \emph{local information} on the particle density $\rho^v$ on the set $\mc E^+_v$ of outgoing links of the non-destination node $v$. 

We now state the definition of a dynamical flow networks and its transfer efficiency.\medskip 

\begin{definition}[Dynamical flow network and its transfer efficiency]\label{def:dynamicalflownetwork}
A \emph{dynamical flow network} associated to a flow network $\mc N$ satisfying Assumption \ref{ass:acyclicity}, a distributed routing policy $\mc G$, and an inflow $\lambda_0\ge0$, is the dynamical system 
\be\label{dynsyst}
\ds\frac{\de}{\de t}\rho_e(t)=\lambda_v(t)G^v_e(\rho^v(t))-f_e(t)\,,\qquad \forall\,0\le v<n\,,\quad\forall\,e\in\mc E^+_v\,,\ee
where
\be\label{felambdadef} f_e(t):=\mu_e(\rho_e(t))\,,\qquad \lambda_v(t):=\l\{\ba{lcl}\lambda_0&\se&v=0\\\sum_{e\in\mc E^-_v}f_e(t)&\se&0<v \leq  n.\ea\r.\ee
\label{def:alphatransferring}
Given some flow vector $f^{\circ}\in\mc F$, the dynamical flow network (\ref{dynsyst}) is said to be  \emph{fully transferring} with respect to $f^{\circ}$ if the solution of (\ref{dynsyst}) with initial condition $\rho(0)=\mu^{-1}(f^{\circ})$ satisfies 
\be\label{alphatransferringdef}\lim_{t\to+\infty}\lambda_n(t) = \lambda_0\,.\ee
\end{definition}
\medskip

Definition \ref{def:alphatransferring} states that a dynamical flow network is fully transferring when the outflow is asymptotically equal to the inflow, i.e., there is no throughput loss asymptotically. 
Observe that a fully transferring dynamical flow network does not necessarily imply that the link-wise flows necessarily converge to an equilibrium, for it might in principle have a persistently oscillatory or more complex behavior. Nevertheless, it will prove useful to introduce the notions of equilibrium and limit flow as follows. 
\medskip

\begin{definition}[Equilibrium and limit flow of a dynamical flow network]\label{def:equilibriumetc}
An \emph{equilibrium flow} for the dynamical flow network (\ref{dynsyst}) is a vector $f^*\in\mc F^*(\lambda_0)$ such that \be\label{equilibriumdef}\lambda_v^* G^v_e(\rho^v)=f_e^*\,,\qquad \forall e\in\mc E^+_v\,,\ \forall 0\le v<n\,,\ee where $\rho^v_e:=\mu_e^{-1}(f_e^*)$, and $\lambda_v^*=\lambda_0$ for $v=0$ and $\lambda_v^*=\sum_{e \in \mc E_v^-} f_e^*$ for $0 < v < n$. \\
A \emph{limit flow} for the dynamical flow network (\ref{dynsyst}) is a vector $f^*\in\cl(\mc F)$ such that the solution of (\ref{dynsyst}) with initial condition $\rho(0)=\mu^{-1}(f^{\circ})$ satisfies
\be\label{limitflow}\lim_{t\to+\infty}f(t)=f^*\,.\ee
The set of all initial flows $f^{\circ}\in\mc F$ such that (\ref{limitflow}) is satisfied will be referred to as the \emph{basin of attraction} of $f^*$, and denoted by $\mc B(f^*)$. 
\end{definition}\medskip

\begin{remark}\label{remark:limitflowequilibrium}
Observe that an equilibrium flow $f^*\in\mc F^*(\lambda_0)$ is always a limit flow, since the solution of the dynamical flow network (\ref{dynsyst}) with initial flow $f^{\circ}=f^*$ stays put for all $t\ge0$, and hence it is trivially convergent to $f^*$. On the other hand, if a limit flow $f^*\in\cl(\mc F)$ satisfies all the capacity constraints with strict inequality, i.e., if $f^*\in\mc F$, then necessarily $f^*\in\mc F^*(\lambda_0)$ is also an equilibrium flow for (\ref{dynsyst}), i.e., it satisfies mass conservation equations at all the non-destination nodes.  In particular, if a dynamical flow network admits an equilibrium flow $f^*$, then it is necessarily fully transferring with respect to $f^*$, as well as with respect to all the initial flows $f^{\circ}\in\mc B(f^*)$. 
  
In contrast, if $f^*\in\cl(\mc F)\setminus\mc F$, i.e., if at least one of the capacity constraints is satisfied with equality, then $f^*$ is not an equilibrium flow for (\ref{dynsyst}). In fact, in this case one has that $\sum\nolimits_{e\in\mc E^+_v}f_e^*\le\lambda_v^*$ with possibly strict inequality for some non-destination node $0\le v<n$. Hence, the dynamical flow network might still be non fully transferring. Finally, observe that a limit flow $f^*\in\cl(\mc F)$ (and, {\it a fortiori}, an equilibrium flow) may not exist for general flow networks $\mc N$, and distributed routing policies $\mc G$. 
 
\end{remark}\medskip

\begin{remark}\label{remark:standardflow}
Standard definitions in the literature are typically limited to static flow networks describing the particle flow at equilibrium via conservation of mass. In fact, they usually consist (see e.g., \cite{Ahuja.Magnanti.ea:93}) in the specification of a topology $\mc T$, a vector of flow capacities $f^{\max}\in\mc R$, and an admissible equilibrium flow vector $f^*\in\mc F^*(\lambda_0)$ for $\lambda_0<C(\mc N)$ (or, often, $f^*\in\cl(\mc F^*(\lambda_0))$ for $\lambda_0\le C(\mc N)$). 

In contrast, in our model we focus on the off-equilibrium particle dynamics on a flow network $\mc N$, induced by a distributed routing policy $\mc G$. Existence of an equilibrium of the dynamical flow network (\ref{dynsyst}) depends on the topology $\mc T$, the structural form of the flow functions $\mu$ and of the distributed routing policy $\mc G$, as well as on the inflow $\lambda_0$. A necessary condition for that is $\lambda_0<C(\mc N)$. In contrast, simple, locally verifiable, sufficient conditions on $\mc G$ for the existence of an equilibrium flow might be hard to find for complex flow networks. However, in some cases, it is reasonable to assume the distributed routing policy $\mc G$ to be the outcome of a slow time-scale evolutionary dynamics with global feedback which can naturally lead to an equilibrium flow $f^*\in\mc F^*(\lambda_0)$. This has been shown, e.g., in our related work \cite{Como.Savla.ea:Wardrop-arxiv} on transportation networks, where the emergence of Wardrop equilibria is proven using tools from singular perturbation theory and evolutionary dynamics. Multiple time-scale dynamics leading to Wardrop equilibria has been studied in \cite{Borkar.Kumar:03} for communication networks.
\end{remark}\medskip
%

While, as discussed in Remark \ref{remark:standardflow}, finding simple, locally verifiable, sufficient conditions on the distributed routing policy $\mc G$ for the existence of an equilibrium flow of the associated dynamical flow network (\ref{dynsyst}) is typically nontrivial, a large class of distributed routing policies was proven to yield existence and uniqueness of a globally attractive limit flow $f^*\in\cl(\mc F)$, as revised below. 

\medskip
\begin{definition}[Locally responsive distributed routing policy]\label{def:myopicpolicy}
A \emph{locally responsive} distributed routing policy for a flow network topology $\mc T=(\mc V,\mc E)$ with node set $\mc V=\{0,1,\ldots,n\}$ is a family of continuously differentiable distributed routing functions $\mc G=\{G^v:\mc R_v\to\mc S_v\}_{v\in\mc V}$ such that, for every non-destination node $0\le v<n$:
\begin{description}
\item[(a)]
$\ds\frac{\partial}{\partial \rho_e}G^v_j(\rho^v)\ge0 \,,\qquad \forall j,e\in\mc E^+_v\,,  j\ne e\,,\rho^v\in\mc R_v\,;$
\item[(b)]
for every nonempty proper subset $\mc J\subsetneq\mc E^+_v$, there exists a continuously differentiable map $G^{\mc J}:\mc R_{\mc J}\to\mc S_{\mc J}$, where $\mc R_{\mc J}:=\R_+^{\mc J}$, and $\mc S_{\mc J}:=\{p\in\mc R_{\mc J}:\,\sum_{j\in\mc J}p_j=1\}$ is the simplex of probability vectors over $\mc J$, such that, for every $\rho^{\mc J}\in\mc R_{\mc J}$, if 
$$\rho^v_e\to+\infty\,,\ \ \forall e\in\mc E^+_v\setminus\mc J\,,\qquad\rho_j\to\rho_j^{\mc J}\,,\ \ \forall j\in\mc J\,,$$ then  $$G^v_e(\rho^v)\to0,\ \ \forall e\in\mc E^+_v\setminus\mc J\,,\qquad G^v_j(\rho)\to G^{\mc J}_j(\rho^{\mc J}),\ \ \forall j\in\mc J\,.$$
\end{description}
\end{definition}

Let us restate the result proven in  \cite[Theorem 1]{PartI}.

\smallskip
\begin{theorem}[Existence of a globally attractive limit flow under locally responsive routing policies]\label{thm:uniquelimitflow}
Let $\mc N$ be a flow network satisfying Assumptions \ref{ass:acyclicity} and \ref{ass:flowfunction}, $\lambda_0\ge0$ a constant inflow, and $\mc G$ a locally responsive distributed routing policy. Then, there exists a unique limit flow $f^*\in\cl(\mc F)$ such that $\mc B(f^*)=\mc F$. Moreover, if $f_e^*=f_e^{\max}$ for some $e\in\mc E^+_v$, and $0\le v<n$, then $f_e^*=f_e^{\max}$, for every $e\in\mc E^+_v$.
\end{theorem}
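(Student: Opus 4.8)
The plan is to exploit the acyclic topology to peel the network apart node by node and, at each node, to analyze a \emph{cooperative} (order‑preserving) ODE driven by a converging external inflow. Using the vertex ordering \eqref{vertexordering}, I would induct on $v=0,1,\dots,n-1$. At the origin the inflow is the constant $\lambda_0$, so $\rho^0(t)=(\rho_e(t))_{e\in\mc E^+_0}$ satisfies the \emph{autonomous} system $\dot\rho_e=\lambda_0 G^0_e(\rho^0)-\mu_e(\rho_e)$ on $\mc R_0$; at a general non‑destination node $v$, since $\mc E^-_v\subseteq\bigcup_{u<v}\mc E^+_u$, the inductive hypothesis that $f_e(t)$ converges for every link leaving some $u<v$ gives $\lambda_v(t)=\sum_{e\in\mc E^-_v}f_e(t)\to\lambda_v^*$ for a constant $\lambda_v^*\ge0$, so $\rho^v(t)$ obeys an \emph{asymptotically autonomous} ODE whose limit equation is $\dot\rho_e=\lambda_v^* G^v_e(\rho^v)-\mu_e(\rho_e)$ on $\mc R_v$. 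Everything thus reduces to: (A) for a single node with constant inflow $\lambda\ge0$, the system $\dot\rho_e=\lambda G^v_e(\rho)-\mu_e(\rho_e)$ has a globally attractive limit point $\rho^{v,*}\in[0,+\infty]^{\mc E^+_v}$, equivalently a globally attractive limit flow $f^{v,*}\in\cl(\mc F_v)$ with $f^{v,*}_e=\mu_e(\rho^{v,*}_e)$ (reading $\mu_e(+\infty)=f_e^{\max}$); and (B) a standard convergence result for asymptotically autonomous systems then transfers this to convergence of the genuine trajectory $\rho^v(t)$. Carried through the induction, this shows $f(t)$ converges to a deterministic $f^*\in\cl(\mc F)$ for \emph{every} $f^\circ\in\mc F$, i.e. $f^*$ is the unique limit flow and $\mc B(f^*)=\mc F$.

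For the single‑node claim (A) I would combine monotone systems theory with the structure of Definition \ref{def:myopicpolicy}. First, $\mc R_v$ is forward invariant because on each face $\{\rho_e=0\}$ one has $\dot\rho_e=\lambda G^v_e(\rho)\ge0$. Second, the system is cooperative: for $e\ne j$, $\partial_{\rho_j}\big(\lambda G^v_e(\rho)-\mu_e(\rho_e)\big)=\lambda\,\partial_{\rho_j}G^v_e(\rho)\ge0$ by part (a), so its flow preserves the coordinatewise order. Third, $\rho=0$ is a sub‑solution, so the trajectory $\rho^-(\cdot)$ from $0$ is componentwise nondecreasing and converges in $[0,+\infty]^{\mc E^+_v}$; when $\lambda<f_e^{\max}$ for every $e\in\mc E^+_v$, any sufficiently large $M\onebf$ is a super‑solution (there $\dot\rho_e\le\lambda-\mu_e(M)<0$), the trajectory from it is nonincreasing and converges, and every trajectory with $\rho(0)\in\mc R_v$ is sandwiched between the two once $M\ge\max_e\rho_e(0)$. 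Fourth, the two sandwiching limits coincide: if $\rho^a\le\rho^b$ are equilibria, summing $\mu_e(\rho_e)=\lambda G^v_e(\rho)$ over $e\in\mc E^+_v$ and using $\sum_e G^v_e\equiv1$ gives $\sum_e\mu_e(\rho^a_e)=\lambda=\sum_e\mu_e(\rho^b_e)$, whence each $\mu_e(\rho^a_e)=\mu_e(\rho^b_e)$ and, $\mu_e$ being strictly increasing, $\rho^a=\rho^b$. Hence the minimal and maximal equilibria agree and the limit point is globally attractive.

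The remaining regime, where $\lambda\ge f_e^{\max}$ for some $e$ (so that links may saturate and the $\rho$‑picture is non‑compact), is handled through part (b) of Definition \ref{def:myopicpolicy}, which also yields the ``moreover'' assertion. Suppose that along the limiting trajectory at node $v$ a nonempty set $\mc E^+_v\setminus\mc J$ of links saturates, i.e. $\rho_e(t)\to+\infty$ for $e\notin\mc J$, while $\rho_j(t)\to\rho^{\mc J}_j<\infty$ for $j\in\mc J$ with $\mc J\neq\emptyset$. Part (b) then forces $G^v_e(\rho^v(t))\to0$ for every $e\notin\mc J$, so the inflow $\lambda_v(t)G^v_e(\rho^v(t))$ into such a link tends to $0$ while its outflow $\mu_e(\rho_e(t))\to f_e^{\max}>0$; hence $\dot\rho_e(t)\to-f_e^{\max}<0$ eventually, contradicting $\rho_e(t)\to+\infty$. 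Thus at each node either all outgoing flows saturate or none does — which is exactly the ``moreover'' statement — and, together with $\sum_e\dot\rho_e=\lambda-\sum_e\mu_e(\rho_e)\ge\lambda-\sum_e f_e^{\max}$ and a comparison argument, this identifies the limit flow as $f^{v,*}_e=f_e^{\max}$ for all $e$ precisely when $\lambda\ge\sum_e f_e^{\max}$, and as the (finite, unique) interior equilibrium otherwise.

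The main obstacle I anticipate is exactly this last, non‑compact case together with the rigorous use of the asymptotically‑autonomous transfer: when $f_e^{\max}\le\lambda<\sum_{e'} f_{e'}^{\max}$ for some but not all $e$ there is no obvious super‑solution $M\onebf$, so one must instead rule out unbounded oscillation of individual densities (using cooperativity and the dichotomy above) to confine every trajectory to a compact order interval, and one must show that the $\omega$‑limit set of the genuinely non‑autonomous trajectory $\rho^v(t)$ sits inside that of the limit equation without drifting to the ``boundary at infinity'' along a chain of equilibria. This is where conditions (a)--(b) do the real work, and where care is needed because the natural state space in $\rho$‑coordinates is unbounded.
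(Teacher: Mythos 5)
Your outline matches the route the paper actually takes: this theorem is not proved in the present paper but in the companion paper \cite{PartI}, and the lemmas of \cite{PartI} invoked in Appendix B (monotone dependence of the local equilibrium on the input, convergence of the single-node system when the input converges to a value below the sum of outgoing capacities, strict feasibility of the local equilibrium) show that the real proof has exactly your skeleton: analysis of the single-node cooperative system, then induction along the topological ordering of the acyclic graph, with property (b) handling saturation. However, your proposal leaves the two technically decisive steps unproved, and in both cases the missing ingredient is concrete. First, for a node with constant input $\lambda$ satisfying $f_e^{\max}\le\lambda<\sum_{e\in\mathcal E_v^+}f_e^{\max}$ for some $e$, your sandwich argument has no upper trajectory: $M\mathbf 1$ is not a super-solution, and cooperativity alone does \emph{not} imply convergence of bounded (let alone unbounded) trajectories of a monotone system, so ``rule out unbounded oscillation using cooperativity'' is a restatement of the problem rather than an argument. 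What makes the local system converge is the additional structure coming from $\sum_{e\in\mathcal E_v^+}G^v_e\equiv 1$: differentiating this constraint gives Jacobian column sums equal to $-\mu_j'(\rho_j)<0$ together with nonnegative off-diagonal entries, i.e.\ an $\ell_1$-nonexpansiveness/contraction property of the local flow (alternatively, one must use property (b) to construct a \emph{finite} super-solution dominating an arbitrary initial condition, which is not obvious and is not attempted). You never use the simplex constraint at the differential level, and without it the core convergence claim is unsupported.

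Second, the ``standard convergence result for asymptotically autonomous systems'' you invoke to pass from the limiting autonomous node dynamics to the true dynamics driven by $\lambda_v(t)\to\lambda_v^*$ does not apply as stated: Markus--Thieme-type theorems need precompactness of the orbit and (at least local) stability of the limiting equilibrium, neither of which you establish, and in the saturated regime the ``limit'' is a point at infinity, outside the scope of that theory altogether. The clean repair, consistent with how \cite{PartI}'s lemmas are used in Appendix B of this paper, is to bound $\lambda_v(t)$ between $\lambda_v^*\pm\eps$ for large $t$, use quasimonotonicity of the node dynamics in the input to sandwich the state between solutions of the two constant-input systems, and then let $\eps\to0$ using continuity and monotonicity of the local equilibrium in $\lambda$. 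With these two repairs your induction, your uniqueness argument for ordered equilibria, and your property-(b) contradiction yielding the ``all-or-none'' saturation statement go through; as written, though, the proposal has genuine gaps precisely where the theorem is hard.
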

We shall use the above result in the form of the following corollary, which is an immediate consequence of Theorem \ref{thm:uniquelimitflow} and Remarks \ref{remark:limitflowequilibrium} and \ref{remark:standardflow}. \medskip
\begin{corollary}
Let $\mc N$ be a flow network satisfying Assumptions \ref{ass:acyclicity} and \ref{ass:flowfunction}, $\lambda_0\ge0$ a constant inflow, and $\mc G$ a locally responsive distributed routing policy. If the  limit flow $f^*$ belongs to $\mc F$, then $f^*\in\mc F^*(\lambda_0)$ is a globally attractive equilibrium flow for the dynamical network flow (\ref{dynsyst}), and, consequently, (\ref{dynsyst}) is fully transferring with respect to $f^*$. 
\end{corollary}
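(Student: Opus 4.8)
The plan is to simply assemble Theorem~\ref{thm:uniquelimitflow} with the two observations recorded in Remarks~\ref{remark:limitflowequilibrium} and~\ref{remark:standardflow}; no new machinery is needed, which is why the statement is billed as an immediate corollary.

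First I would invoke Theorem~\ref{thm:uniquelimitflow}: since $\mc G$ is locally responsive and $\mc N$ satisfies Assumptions~\ref{ass:acyclicity} and~\ref{ass:flowfunction}, there is a unique limit flow $f^*\in\cl(\mc F)$ whose basin of attraction is all of $\mc F$, i.e.\ $f(t)\to f^*$ for every initial flow $f^\circ\in\mc F$. This already supplies the \emph{global attractivity} part of the claim, provided we can upgrade $f^*$ from a limit flow to an equilibrium flow. That upgrade is the content of the second step: under the standing hypothesis $f^*\in\mc F$ — all link-wise capacity constraints holding with strict inequality — I would apply the middle part of Remark~\ref{remark:limitflowequilibrium}, which says precisely that a limit flow lying in the admissible set $\mc F$ is necessarily an admissible equilibrium flow $f^*\in\mc F^*(\lambda_0)$, satisfying the mass-conservation relations at every non-destination node together with the fixed-point identity $\lambda_v^* G^v_e(\rho^v)=f_e^*$ with $\rho^v_e=\mu_e^{-1}(f_e^*)$. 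If one prefers to spell this out rather than cite the remark: convergence $f(t)\to f^*$ with $f^*$ interior to $\mc F$ forces the right-hand side of~(\ref{dynsyst}) to vanish at $f^*$; continuity of each $\mu_e$ and of each $G^v$ then lets one pass to the limit in~(\ref{dynsyst})--(\ref{felambdadef}) and read off~(\ref{equilibriumdef}) and the conservation equations.

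Finally I would combine the two conclusions: $f^*\in\mc F^*(\lambda_0)$ is an equilibrium flow, so by the last sentence of Remark~\ref{remark:limitflowequilibrium} the dynamical flow network is fully transferring with respect to it — the solution started at $\rho(0)=\mu^{-1}(f^*)$ stays put for all $t\ge0$, and summing the conservation equations along the vertex ordering~(\ref{vertexordering}) telescopes to $\lambda_n(t)\equiv\sum_{e\in\mc E^+_0}f_e^*=\lambda_0$, which is exactly~(\ref{alphatransferringdef}). Together with $\mc B(f^*)=\mc F$ from the first step, this shows $f^*$ is a globally attractive equilibrium flow. I do not expect a genuine obstacle here; the only point worth a moment's care is that the interiority hypothesis $f^*\in\mc F$ is essential — it is precisely what rules out the degenerate boundary case discussed in the last paragraph of Remark~\ref{remark:limitflowequilibrium}, in which a limit flow $f^*\in\cl(\mc F)\setminus\mc F$ need not satisfy mass conservation and full transfer may fail.
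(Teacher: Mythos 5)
Your proposal is correct and follows essentially the same route as the paper, which presents the corollary as an immediate consequence of Theorem~\ref{thm:uniquelimitflow} and Remarks~\ref{remark:limitflowequilibrium} and~\ref{remark:standardflow}: global attractivity from the theorem, the upgrade from limit flow to equilibrium flow from the interiority hypothesis $f^*\in\mc F$, and full transfer from the equilibrium property. The extra details you supply (passing to the limit in~(\ref{dynsyst}) and telescoping mass conservation to get $\lambda_n\equiv\lambda_0$) are exactly the content the paper leaves implicit in the remarks.
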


\begin{example}[Locally responsive distributed routing policy]
\label{example:routing}
Let $\mc N$ be a flow network satisfying Assumptions \ref{ass:acyclicity} and \ref{ass:flowfunction}, and $0\le\lambda_0<C(\mc N)$ a constant incoming flow. For $\floweq =\mu(\rhoeq)\in\mc F^*(\lambda_0)$, and $\ilogitconst>0$, define the distributed routing policy $\mc G$ by
\begin{equation}
\label{eq:routing-example}
G^v_e(\rho)=\frac{\floweq_e \exp(- \ilogitconst (\rho_e-\rhoeq_e))}{\sum_{j \in \mc E_v^+} \floweq_j \exp(- \ilogitconst (\rho_j-\rhoeq_j))}\,, \qquad \forall e \in \mc E_v^+\,, \quad \forall 0\le v<n\,.
\end{equation}
Then, $\mc G$ can be easily verified to be locally responsive, and $f^*$ to be the globally attractive limit flow of the associated dynamical flow network (\ref{dynsyst}). 
\end{example}\medskip

\section{Strong resilience of dynamical flow networks} \label{sec:perturbations}
In this section, we shall introduce the notion of strong resilience of a dynamical flow network, and show that locally responsive policies are maximally robust among the class of distributed routing policies. We shall also provide an explicit simple characterization of the maximal strong resilience of a dynamical flow network with respect to a given limit flow. 

We shall consider persistent perturbations of the dynamical flow network (\ref{dynsyst}) that reduce the flow functions on the links, as per the following: 
\begin{definition}[Admissible perturbation]\label{def:admissibeperturbation}
An \emph{admissible perturbation} of a flow network $\mc N=(\mc T,\mu)$, satisfying Assumptions \ref{ass:acyclicity} and \ref{ass:flowfunction}, is a flow network $\tilde{\mc N}=({\mc T},\tilde\mu)$, with the same topology $\mc T$, and a family of perturbed flow functions $\tilde\mu:=\{\tilde\mu_e:\R_+\to\R_+\}_{e\in\mc E}$, such that, for every $e\in\mc E$, $\tilde\mu_e$ satisfies Assumption \ref{ass:flowfunction}, as well as
$$\tilde\mu_e(\rho_e)\le\mu_e(\rho_e)\,,\qquad \forall \rho_e\ge0\,.$$
We accordingly let $\newflowmaxe:=\sup\{\tilde{\mu}_e(\rho_e):\rho_e\ge0\}$.
The \emph{magnitude} of an admissible perturbation is defined as 
\be\label{deltadef}\delta:=\sum\nolimits_{e\in\mc E}\delta_e\,,\qquad\delta_e:=\sup\l\{\mu_e(\rho_e)-\tilde\mu_e(\rho_e):\,\rho_e\ge0\r\}\,.\ee 
\end{definition}\medskip

Given a dynamical flow network as in Definition \ref{def:dynamicalflownetwork}, and an admissible perturbation as in Definition \ref{def:admissibeperturbation}, we shall consider the \emph{perturbed dynamical flow network}
\be\label{pertdynsyst}
\ds\frac{\de}{\de t}\tilde\rho_e(t)=\tilde\lambda_v(t)G^v_e(\tilde\rho^v(t))- \tilde f_e(t)\,,\qquad\forall\,0\le v<n\,,\quad\forall\,e\in\mc E^+_v\,,\ee
where
\be\tilde f_e(t):=\tilde\mu_e(\tilde\rho_e(t))\,,\qquad \tilde\lambda_v(t):=\l\{\ba{lcl}\sum_{e\in\mc E^-_v}\tilde f_e(t)&\se&0<v < n\\\lambda_0&\se&v=0\,.\ea\r.
\ee

We are now ready to define the notion of strong resilience of a dynamical flow network as in Definition \ref{def:dynamicalflownetwork} with respect to a limit flow $f^*$.
\medskip
\begin{definition}[Strong resilience of a dynamical flow network]\label{def:stabilitymargins}
Let $\mc N$ be a flow network satisfying Assumptions \ref{ass:acyclicity} and \ref{ass:flowfunction}, $\lambda_0\ge0$ be a constant inflow at the origin, and $\mc G$ a distributed routing policy. Assume that the corresponding dynamical flow network has a limit flow $f^*\in\cl(\mc F)$. The \emph{strong resilience}  $\gamma_{1}(f^*,\mc G)$ is equal to the infimum magnitude of all the admissible perturbations for which the perturbed dynamical flow network (\ref{pertdynsyst}) is not fully transferring with respect to some initial flow $f^{\circ}\in\mc B(f^*)$. 
 \end{definition}\medskip

Note that the notion of strong resilience formalized in Definition~\ref{def:stabilitymargins} is with respect to the worst-case scenario.
Accordingly, one can provide an adversarial interpretation to the perturbations as in \cite{PartI}.
Our first result is an upper bound on the strong resilience of a dynamical flow network driven by an arbitrary distributed routing policy. In order to state such result, for a flow network $\mc N$, and a flow vector $\floweq\in\cl(\mc F)$, define the \emph{minimum node residual capacity} as 
\be\label{gammadef}R(\mc N,\floweq):=\min_{0\le v<n}\l\{\sum\nolimits_{e \in \edgeset_v^+}\left(\flowmaxe - \floweqe \right)\r\}\,.\ee
\smallskip
\begin{theorem}[Upper bound on the strong resilience]\label{lemma:upperbound}
Let $\mc N$ be a flow network satisfying Assumptions \ref{ass:acyclicity} and \ref{ass:flowfunction}, $\lambda_0\ge0$ a constant inflow, and $\mc G$ any distributed routing policy. Assume that the associated dynamical flow network has a limit flow $f^*\in\mc F^*(\lambda_0)$. Then, 
$$\gamma_1(\floweq,\mc G)\le R(\mc N,f^*)\,.$$
\end{theorem}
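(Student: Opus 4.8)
The plan is to exhibit, for an arbitrarily small $\varepsilon > 0$, an admissible perturbation of magnitude close to $R(\mc N, f^*)$ that makes the perturbed dynamical flow network fail to be fully transferring, starting from the initial flow $f^\circ = f^*$ (which lies in $\mc B(f^*)$ since $f^*$ is a limit flow). Let $v^*$ be a non-destination node attaining the minimum in (\ref{gammadef}), i.e. $\sum_{e\in\mc E^+_{v^*}}(\flowmax_e - f^*_e) = R(\mc N, f^*)$. The idea is to cap the flow functions on exactly the links outgoing from $v^*$: for each $e \in \mc E^+_{v^*}$ choose $\tilde\mu_e$ to agree with $\mu_e$ for small densities but to saturate at a level $\newflowmaxe$ slightly above $f^*_e$, say $\newflowmaxe = f^*_e + \varepsilon/|\mc E^+_{v^*}|$, keeping $\tilde\mu_e$ strictly increasing and continuously differentiable so Assumption \ref{ass:flowfunction} holds; leave $\tilde\mu_e = \mu_e$ for every link not outgoing from $v^*$. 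The magnitude of this perturbation is $\sum_{e\in\mc E^+_{v^*}}(\flowmax_e - \newflowmaxe) = R(\mc N, f^*) - \varepsilon$.

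Next I would argue that under this perturbation the network cannot be fully transferring. The total outflow capacity out of $v^*$ is now $\sum_{e\in\mc E^+_{v^*}}\newflowmaxe = \sum_{e\in\mc E^+_{v^*}} f^*_e + \varepsilon$. On the other hand, because $f^*$ is an equilibrium flow in $\mc F^*(\lambda_0)$, mass conservation at $v^*$ gives $\sum_{e\in\mc E^+_{v^*}} f^*_e = \lambda^*_{v^*}$, the inflow into $v^*$ in the unperturbed equilibrium. So if the perturbed inflow $\tilde\lambda_{v^*}(t)$ into $v^*$ stays at least $\lambda^*_{v^*} + 2\varepsilon$ eventually — which it does if the upstream part of the network keeps delivering the same flow, since we did not touch any upstream links — then the node $v^*$ physically cannot pass all of it downstream: $\sum_{e\in\mc E^+_{v^*}}\tilde f_e(t) \le \sum_{e\in\mc E^+_{v^*}}\newflowmaxe < \tilde\lambda_{v^*}(t)$, so the densities on the outgoing links of $v^*$ grow without bound and a strictly positive amount of flow $\ge \varepsilon$ is "lost" at $v^*$ relative to what enters it, forcing $\limsup_{t\to\infty}\tilde\lambda_n(t) < \lambda_0$. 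I would formalize the "keeps delivering the same flow upstream" claim by induction along the topological order (\ref{vertexordering}): restricted to the subnetwork of nodes $u$ with $u$ not downstream of $v^*$ (more precisely, those from which $v^*$ is not reachable, together with handling of the split at $v^*$ itself), the perturbed dynamics coincides with the unperturbed one, whose solution from $f^\circ = f^*$ stays at $f^*$ for all $t$; hence the inflow into $v^*$ is exactly $\lambda^*_{v^*}$ for all $t$, and then the saturation argument at $v^*$ applies directly with the cleaner bound $\sum_{e\in\mc E^+_{v^*}}\tilde f_e(t) \le \sum_{e\in\mc E^+_{v^*}} f^*_e + \varepsilon$ versus constant inflow $\lambda^*_{v^*} = \sum_{e\in\mc E^+_{v^*}} f^*_e$.

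The cleanest way to nail the final "not fully transferring" conclusion is a conservation-of-mass accounting over the whole perturbed network: since all flows $\tilde f_e$ are bounded (by $\newflowmaxe$ or $\flowmax_e$), and the densities on $\mc E^+_{v^*}$ diverge while the total inflow to $v^*$ is the constant $\lambda^*_{v^*} > \sum_{e\in\mc E^+_{v^*}}\tilde f_e(t) - $ (a gap bounded away from... — more carefully, $\frac{\de}{\de t}\sum_{e\in\mc E^+_{v^*}}\tilde\rho_e(t) = \lambda^*_{v^*} - \sum_{e\in\mc E^+_{v^*}}\tilde f_e(t) \ge \lambda^*_{v^*} - \sum_{e\in\mc E^+_{v^*}}\newflowmaxe = -\varepsilon$... ) I should be slightly careful here: the gap argument needs the inflow to strictly exceed the outgoing capacity, which it does not in the construction above ($\lambda^*_{v^*} < \sum\newflowmaxe$). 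So the right construction is to saturate at $\newflowmaxe$ with $\sum_{e\in\mc E^+_{v^*}}\newflowmaxe = \sum f^*_e - \varepsilon < \lambda^*_{v^*}$ — i.e. cap \emph{below} the equilibrium total — giving perturbation magnitude $R(\mc N,f^*) + \varepsilon$; letting $\varepsilon \downarrow 0$ then yields $\gamma_1(f^*,\mc G) \le R(\mc N, f^*)$. With this sign fixed, $\frac{\de}{\de t}\sum_{e\in\mc E^+_{v^*}}\tilde\rho_e \ge \varepsilon > 0$ always, so the densities diverge, each $\tilde f_e \to \newflowmaxe$, and the outflow from $v^*$ tends to $\sum f^*_e - \varepsilon < \lambda_0$ worth of "what should reach $v^*$", which by a downstream mass-balance argument caps $\lim\tilde\lambda_n(t) \le \lambda_0 - \varepsilon < \lambda_0$.

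I expect the main obstacle to be the bookkeeping that makes the two informal claims rigorous: (i) that capping only the links out of $v^*$ leaves the upstream inflow into $v^*$ genuinely unchanged — this requires that $v^*$ reachability partitions the node set compatibly with the topological order and that the perturbed and unperturbed ODEs agree on the "upstream" coordinates, which is immediate from the product structure of (\ref{dynsyst}) and the fact that $G^u$ depends only on $\rho^u$; and (ii) the downstream step, converting "node $v^*$ permanently leaks at least $\varepsilon$" into "$\limsup_t \tilde\lambda_n(t) \le \lambda_0 - \varepsilon$" — this follows by summing the density ODEs over all nodes and using boundedness of all link flows (a telescoping conservation-of-mass identity over the DAG), but needs care because downstream densities need not converge. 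Using $\limsup$/$\liminf$ and the constancy of total injected mass $\lambda_0$ handles this without needing convergence of the perturbed trajectory.
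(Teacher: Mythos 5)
Your proposal is correct and follows essentially the same route as the paper's proof: perturb only the links outgoing from a node attaining the minimum in (\ref{gammadef}) so that their total perturbed capacity drops just below that node's equilibrium throughput (magnitude $R(\mc N,f^*)+\varepsilon$, with $\varepsilon\downarrow 0$), observe that the upstream dynamics are untouched (your choice $f^\circ=f^*$ is legitimate, since Definition~\ref{def:stabilitymargins} only requires failure for some initial flow in $\mc B(f^*)$ and $f^*\in\mc F^*(\lambda_0)$ is an equilibrium), and rule out full transfer by nonnegativity of the mass stored downstream of the cut. The only slip is cosmetic: that mass-balance argument yields $\liminf_{t\to\infty}\tilde\lambda_n(t)\le\lambda_0-\varepsilon$ (as in the paper's Gronwall/contradiction step), not a bound on the $\limsup$, but this already negates full transferability, so the conclusion stands.
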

\begin{proof}See Appendix \ref{sec:proof1}.\end{proof}\medskip

The proof of Theorem \ref{lemma:upperbound} essentially depends only on Assumption \ref{ass:acyclicity} on the acyclicity of the network topology. However, in order to show that the upper bound in Theorem \ref{lemma:upperbound} is tight for locally responsive policies, we have to rely highly on Properties (a) and (b) of Definition \ref{def:myopicpolicy}. The following example illustrates the necessity of these properties.


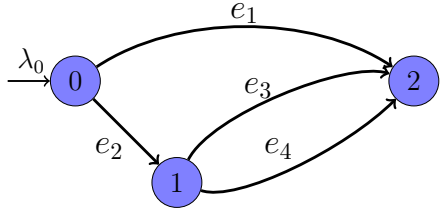
\begin{figure}
\begin{center}
\scalebox{0.9}
{
\begin{tikzpicture}
\path (0,0) node(a) [circle,draw,fill=blue!50!white] {$0$}
		(1.5,-1.5) node (b) [circle,draw,fill=blue!50!white] {$1$}
	(5, 0) node (c) [circle,draw,fill=blue!50!white] {$2$};
\draw[thick,->] (-1,0) -- (a); 
\draw[very thick,->] (a) -- (b);
\draw[very thick,->] (a) .. controls (1.5,1) and (3.5,1) .. (c);
\draw[very thick,->] (b) .. controls (2,-0.5) and (4,0.3) .. (c);
\draw[very thick,->] (b) .. controls (2.4,-1.8) and (4,-1) .. (c);
\draw (-0.65,0.3) node {$\lambda_0$};
\draw (2.5,1) node (d) {\large $e_1$}; 
\draw (0.5,-1) node {\large $e_2$};
\draw (2.7,-0.1) node {\large $e_3$}; 
\draw (3,-1) node {\large $e_4$}; 
\end{tikzpicture}
}
\end{center}
\caption{\label{fig:A5-justification} The network topology used in Example~\ref{ex:non-trivial}.}

\end{figure}

\begin{example}
\label{ex:non-trivial}
\label{example:need-diffusivity}
Consider the topology illustrated in Figure~\ref{fig:A5-justification}, with $\lambda_0=2$, flow functions given by
\begin{equation}
\label{example:flowfunction}
\mu_e(\rho_e)=\flowmax_e \left(1- \exp(-a_e \rho_e) \right)
\end{equation}
with $a_1=a_2=a_3=a_4=1$ and $\flowmax_{e_1}=\flowmax_{e_2}=2$, $\flowmax_{e_3}=\flowmax_{e_4}=0.75$. First consider the case when $G^0_{e_1}(\rho^0)=1-G^0_{e_2}(\rho^0) \equiv 0.75$, and $G^1_{e_3}(\rho^1)=1-G^1_{e_4}(\rho^1) \equiv 0.5$. One can verify that the associated dynamical flow network has a unique equilibrium flow $f^*$ with $\floweq_{e_1}=1.5$, $\floweq_{e_2}=0.5$, and $\floweq_{e_3}=\floweq_{e_3}=0.25$. Now, consider an admissible perturbation such that $\tilde\mu_{e_1}=0.7\mu_{e_1}$ and $\tilde\mu_{e_k}=\mu_{e_k}$ for $k=2,3,4$. The magnitude of such perturbation is $\delta=\delta_{e_1}=0.6$. It is easy to see that in this case $\lim_{t \to \infty} \tilde f_{e_1}(t)=1.4=\tilde f_{e_1}^{\max}$ which is less than $1.5$, which is the flow routed to it. Therefore, $\lim_{t \to \infty} \tilde \lambda_2(t)=1.9<\lambda_0$, and hence the network is not fully transferring. 

Now, consider the same (unperturbed) flow network as before, but with distributed routing policies such that $$G^0_{e_1}(\rho^0)=1-G^0_{e_2}(\rho^0)=2 e^{-0.031 \rho_{e_1}}/(2 e^{-0.031 \rho_{e_1}}+e^{0.7196 \rho_{e_2}})\,,\qquad G^1_{e_3}(\rho^1)=1-G^1_{e_4}(\rho^1) \equiv 0.5\,.$$ One can verify that the associated dynamical flow network again admits the same $f^*$ as before as an equilibrium flow. 
Let us consider the same admissible perturbation as before. One can verify that, for the corresponding perturbed dynamical flow network, $\lim_{t \to \infty} \tilde f_{e_1}(t)=0.4 < \tilde f_{e_1}^{\max}=1.4$ and $\lim_{t \to \infty} \tilde f_{e_2}(t)=1.6 < \tilde f_{e_2}^{\max}=2$. However, with an asymptotic arrival rate of $1.6$ at node $1$, we have that $\lim_{t \to \infty} \tilde f_{e_3}(t)=0.75=\tilde f_{e_3}^{\max}$ and $\lim_{t \to \infty} \tilde f_{e_4}(t)=0.75=\tilde f_{e_4}^{\max}$. Therefore, $\lim_{t \to \infty} \tilde \lambda_2(t)=1.9<\lambda_0$, and hence the network is not fully transferring. 

In both the cases, $R(\mc N,f^*)=1$ and a disturbance of magnitude $0.6$ is enough to ensure that the perturbed dynamical flow network is not fully transferring. However, note that in the second case, unlike the first case, the routing policy at node 
$0$ responds to variations in the local flow densities by sending more flow to link $e_2$, but it is \emph{overly} responsive in the sense that it sends more flow downstream than the cumulative flow capacity of the links outgoing from node $1$. However, by Definition~\ref{def:distributedroutingpolicy}, a distributed routing policy is not allowed any information about any other link other than the current flow densities of its outgoing links. This illustrates one of the challenges in designing distributed routing policies which yield $R(\mc N,f^*)$ as the strong resilience. Observe that this distributed routing policy is not locally responsive, since $G^0$ used in the first case, does not satisfy Property (b) of Definition~\ref{def:myopicpolicy} and, in the second case, it does not satisfy Properties (a) and (b). 
\end{example}\medskip

We now state the main technical result of this paper, showing that, for locally responsive distributed routing function, the strong resilience coincides with the minimal residual node capacity. 

\smallskip
\begin{theorem}[Strong resilience for locally responsive policies]
\label{maintheo}
Let $\mc N$ be a flow network satisfying Assumptions \ref{ass:acyclicity} and \ref{ass:flowfunction}, $\lambda_0\ge0$ a constant inflow, and $\mc G$ a locally responsive distributed routing policy. Let $f^*\in\cl(\mc F)$ be the globally attractive limit flow of the associated dynamical flow network (\ref{dynsyst}). 
Then,  
$$\gamma_1(f^*,\mc G)=R(\mc N,\floweq)\,.$$
\end{theorem}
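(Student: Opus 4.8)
The plan is to prove the two inequalities $\gamma_1(\floweq,\mc G)\le R(\mc N,\floweq)$ and $\gamma_1(\floweq,\mc G)\ge R(\mc N,\floweq)$ separately. For the upper bound I would first dispose of the degenerate case $\floweq\in\cl(\mc F)\setminus\mc F$: here Theorem~\ref{thm:uniquelimitflow} forces $\floweqe=\flowmaxe$ for \emph{every} $e\in\mc E^+_v$ at some node $v$, so $R(\mc N,\floweq)=0$; summing (\ref{dynsyst}) over $e\in\mc E^+_v$ and using that $\rho_e(t)\to+\infty$ on these saturated links shows the asymptotic inflow $\lambda^*_v:=\sum_{e\in\mc E^-_v}\floweqe$ satisfies $\lambda^*_v\ge\sum_{e\in\mc E^+_v}\flowmaxe$, whence any arbitrarily small perturbation confined to $\mc E^+_v$ — which, by acyclicity, does not affect the upstream trajectory and so leaves $\lambda^*_v$ unchanged — drops the total outgoing capacity at $v$ strictly below $\lambda^*_v$, causes persistent mass accumulation at $v$, and hence a throughput deficit at the destination, giving $\gamma_1(\floweq,\mc G)=0=R(\mc N,\floweq)$. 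In the remaining case $\floweq\in\mc F$, the corollary to Theorem~\ref{thm:uniquelimitflow} gives $\floweq\in\mc F^*(\lambda_0)$, so Theorem~\ref{lemma:upperbound} applies directly and yields $\gamma_1(\floweq,\mc G)\le R(\mc N,\floweq)$.

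For the reverse inequality I would show that \emph{every} admissible perturbation $\tilde{\mc N}$ of magnitude $\delta<R(\mc N,\floweq)$ leaves the perturbed dynamical flow network (\ref{pertdynsyst}) fully transferring. Since $\tilde{\mc N}$ still satisfies Assumptions~\ref{ass:acyclicity}--\ref{ass:flowfunction} and $\mc G$ is still locally responsive, Theorem~\ref{thm:uniquelimitflow} gives a unique globally attractive limit flow $\tilde f^*\in\cl(\tilde{\mc F})$ for (\ref{pertdynsyst}), and by the corollary it suffices to prove $\tilde f^*\in\tilde{\mc F}$, i.e.\ that no node is saturated in $\tilde f^*$. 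Suppose not, and let $v$ be the saturated node of smallest index in an ordering satisfying (\ref{vertexordering}); then Theorem~\ref{thm:uniquelimitflow} forces every link in $\mc E^+_v$ to be saturated, while every proper ancestor of $v$ has smaller index, is therefore unsaturated, and so conserves mass in $\tilde f^*$. Arguing at $v$ exactly as in the degenerate case gives $\tilde\lambda^*_v\ge\sum_{e\in\mc E^+_v}\newflowmaxe$, and since $\tilde\mu_e\ge\mu_e-\delta_e$ pointwise implies $\newflowmaxe\ge\flowmaxe-\delta_e$, we obtain $\tilde\lambda^*_v\ge\sum_{e\in\mc E^+_v}(\flowmaxe-\delta_e)$.

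The crux is the following propagation bound, which I would establish by induction along the topological order. Writing $\delta(\mc U):=\sum_{e:\,\mathrm{tail}(e)\in\mc U}\delta_e$ for a set of nodes $\mc U$ and $\mathrm{Anc}(w)$ for the set of proper ancestors of $w$, the claim is that
\[
\tilde\lambda^*_w\ \le\ \lambda^*_w+\delta\bigl(\mathrm{Anc}(w)\bigr)
\]
for every node $w$ all of whose proper ancestors are unsaturated in $\tilde f^*$. Applying this with $w=v$, adding $\sum_{e\in\mc E^+_v}\delta_e=\delta(\{v\})$ to both sides, and using the equilibrium mass-conservation identity $\sum_{e\in\mc E^+_v}\floweqe=\lambda^*_v$, the bound of the previous paragraph turns into $\delta\bigl(\mathrm{Anc}(v)\cup\{v\}\bigr)\ge\sum_{e\in\mc E^+_v}(\flowmaxe-\floweqe)\ge R(\mc N,\floweq)$; but $\delta(\mc U)\le\sum_{e\in\mc E}\delta_e=\delta$ for every $\mc U$, contradicting $\delta<R(\mc N,\floweq)$ and establishing $\tilde f^*\in\tilde{\mc F}$.

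The base case $w=0$ of the propagation bound is immediate, since $\tilde\lambda^*_0=\lambda_0=\lambda^*_0$, but the inductive step is where I expect the real difficulty to lie, and it is where Properties~(a) and~(b) of Definition~\ref{def:myopicpolicy} are indispensable: at an unsaturated node $u$ the limiting inflow and split satisfy $\tilde f^*_e=\tilde\lambda^*_u\,G^u_e(\tilde\rho^{u})$ against $\floweqe=\lambda^*_u\,G^u_e(\rho^{u})$ (with $\tilde\rho^{u},\rho^{u}$ the corresponding limiting densities on $\mc E^+_u$), and lowering a link's capacity tends to raise the density it carries (since $\tilde\mu_e^{-1}\ge\mu_e^{-1}$), so Property~(a) forces the routing at $u$ to shift mass \emph{away} from the more congested outgoing links rather than toward them — precisely the feature that Example~\ref{ex:non-trivial} shows fails for non--locally-responsive policies — while Property~(b) pins down the limiting split when some of those densities become large. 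Converting this qualitative picture into the quantitative ``conservation of perturbation'' estimate above, by carrying a joint control on the deviations $\tilde\lambda^*_w-\lambda^*_w$ of the inflows and on the outgoing-link densities through the topological order (and correctly handling nodes whose immediate downstream is saturated), is where essentially all the work of the appendix lies; the degenerate case, the reduction via Theorem~\ref{thm:uniquelimitflow} and its corollary, and the final arithmetic are routine by comparison.
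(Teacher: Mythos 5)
Your overall architecture is the same as the paper's: dispose of the saturated case $f^*\in\cl(\mc F)\setminus\mc F$, get the upper bound from Theorem~\ref{lemma:upperbound}, and for the lower bound apply Theorem~\ref{thm:uniquelimitflow} to the perturbed network and show that whenever $\delta<R(\mc N,f^*)$ the perturbed limit flow $\tilde f^*$ lies in $\mc F$, hence is an equilibrium and the network is fully transferring; your contradiction at a lowest-index saturated node and the final arithmetic also match. The genuine gap is that the entire quantitative core---your ``propagation bound''---is asserted and deferred rather than proven, and it is precisely the content of the paper's two appendix lemmas. The paper first proves a \emph{diffusivity} estimate for the perturbed local system (Lemma~\ref{lemmadiffusivity}): for every node $v$, every local input $\lambda$ below the perturbed capacity, and \emph{every subset} $\mc J\subseteq\mc E^+_v$, one has $\sum_{e\in\mc J}\bigl(\tilde f^*_e(\lambda)-f^*_e\bigr)\le[\lambda-\lambda^*_v]_+ +\sum_{e\in\mc E^+_v}\delta_e$. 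This is where Properties (a) and (b) actually enter, via an invariant-region argument: with the elevated input $\hat\lambda=\max\{\lambda,\lambda^*_v\}$ the orthant $\{\hat\rho^v\ge\rho^*{}^v\}$ is invariant for the local dynamics (monotonicity of $G^v$ from Part I), so every perturbed outgoing flow stays at least $\tilde\mu_e(\rho^*_e)\ge f^*_e-\delta_e$, and monotonicity of the local equilibrium in $\lambda$ transfers the bound from $\hat\lambda$ back to $\lambda$. Your paragraph about ``shifting mass away from congested links'' gestures at this but proves nothing.

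Moreover, the induction hypothesis you propose---a bound on node inflows only, $\tilde\lambda^*_w\le\lambda^*_w+\delta(\mathrm{Anc}(w))$---is too weak to propagate itself. The inflow to $w$ is a sum of perturbed flows over $\mc E^-_w$, which is in general a \emph{proper} subset of the outgoing link sets of several distinct upstream nodes; bounding it requires control of partial sums of $\tilde f^*-f^*$ over arbitrary subsets of links crossing the cut $\{0,\dots,u\}$, and applying a per-node inflow bound at each ancestor separately both double-counts the $\delta_e$ of shared further-upstream nodes and leaves the $[\tilde\lambda^*_u-\lambda^*_u]_+$ terms from the diffusivity estimate uncontrolled. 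The paper's induction statement is exactly the cut-level, subset-wise inequality $\sum_{e\in\mc J}(\tilde f^*_e-f^*_e)\le\sum_{e\in\mc D_u}\delta_e$ for all $\mc J\subseteq\mc B_u$ (with $\mc D_u$ the links outgoing from nodes $0,\dots,u$ and $\mc B_u$ the link-boundary of that node set), and its inductive step splits $\mc J$ into $\mc J_1\subseteq\mc E^+_{v+1}$ and $\mc J_2\subseteq\mc B_v$ and invokes the hypothesis either for $\mc J_2$ or for $\mc J_2\cup\mc E^-_{v+1}$ according to the sign of $\tilde\lambda^*_{v+1}-\lambda^*_{v+1}$; that case split is what absorbs the positive part without double counting. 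Until you formulate and prove both the subset-wise diffusivity lemma and this strengthened induction, the inequality $\gamma_1(f^*,\mc G)\ge R(\mc N,f^*)$---the half of the theorem that actually uses local responsiveness---is not established. (Your handling of the saturated case and of the upper bound is fine, and indeed spells out details the paper leaves implicit.)
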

\proof See Appendix \ref{sec:proof2}.\qed
\medskip

For a given flow network $\mc N$, a constant inflow $\lambda_0$, Theorem \ref{lemma:upperbound} and Theorem \ref{maintheo} imply that, among all distributed routing policies such that the dynamical flow network has a given limit flow $f^*\in\cl(\mc F)$, locally responsive policies (for which such limit flow is unique and globally attractive by Theorem \ref{thm:uniquelimitflow}) have the maximum strong resilience. Moreover, such maximal strong resilience coincides with the minimum node residual capacity $R(\mc N,f^*)$, and hence it depends both on the flow network $\mc N$, and on the limit flow $f^*$ of the unperturbed network. 

\begin{figure}
\begin{center}
\includegraphics[width=9cm,height=6cm]{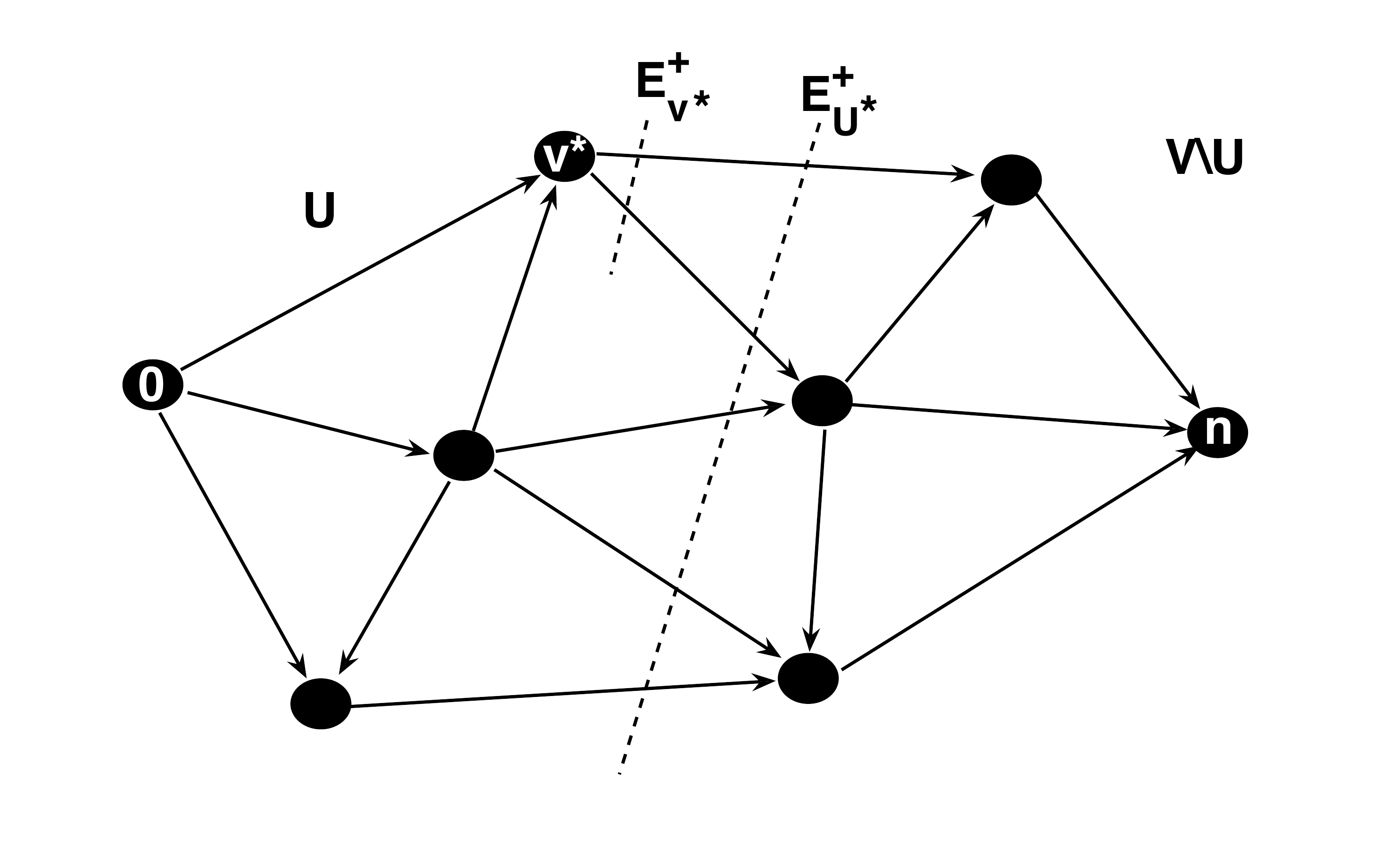} 
\end{center}
\caption{\label{fig:mincutvsnodecut}Comparison between a node-cut and a min-cut of a flow network.}
\end{figure}

A few remarks are in order. First, it is worth comparing the maximum strong resilience of a dynamical flow network with its maximum weak resilience. The latter was studied in \cite{PartI} and was shown (see Definition 6, Proposition 1, and Theorem 2 therein) to be equal to the min-cut capacity of the flow network, $C(\mc N)$. Clearly, the former cannot exceed the latter, as can be also directly verified from the definitions (\ref{gammadef}) and (\ref{def:capacity}): for this, it is sufficient to consider (see Figure \ref{fig:mincutvsnodecut})
$$\mc U^*\in\argmin_{\mc U\text{ origin-destination cut}}\l\{\sum\nolimits_{e\in\mc E^+_{\mc U}}\flowmax_e\r\}\,,\qquad
v^*:=\max\{u\in\mc U^*\}\,,$$ 
and observe that, since $\mc E^+_{v^*}\subseteq\mc E^+_{\mc U^*}$, and $\sum_{e\in\mc E^+_{\mc U^*}}\floweqe=\lambda_0$ by conservation of mass, one has 
$$R(\mc N,f^*)\le \sum_{e\in\mc E^+_{v^*}}(\flowmaxe-\floweqe)\le \sum_{e\in\mc E^+_{\mc U^*}}(\flowmaxe-\floweqe)=\sum_{e\in\mc E^+_{\mc U^*}}\flowmaxe-\lambda_0=C (\mc N)-\lambda_0\,.$$
We provide below two examples to illustrate the difference between the two quantities. 
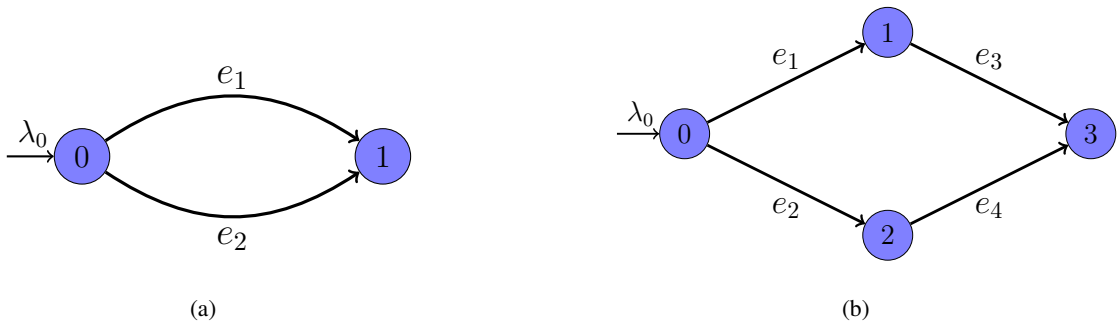
\begin{figure}\begin{center}
\subfigure[]{
\scalebox{1.0}
{\begin{tikzpicture}
\path (0,0) node(a) [circle,draw,fill=blue!50!white] {$0$}
	(4,0) node (b) [circle,draw,fill=blue!50!white] {$1$};
\draw[very thick,->] (a) .. controls (1.5,1) and (2.5,1) .. (b);
\draw [very thick,->] (a) .. controls (1.5,-1) and (2.5,-1) .. (b);
\draw[thick,->] (-1,0) -- (a); 
\draw (-0.65,0.3) node {$\lambda_0$};
\draw (2,1.05) node {\large $e_1$}; 
\draw (2,-1.1) node {\large $e_2$};
\end{tikzpicture}}
}
\hspace{2cm}
\subfigure[]{
\scalebox{0.9}
{
\begin{tikzpicture}

\path (0,0) node(a) [circle,draw,fill=blue!50!white] {$0$}
	(3,1.5) node (b) [circle,draw,fill=blue!50!white] {$1$}
	(3,-1.5) node (c) [circle,draw,fill=blue!50!white] {$2$}
	(6,0) node (d) [circle,draw,fill=blue!50!white] {$3$};
	
\draw[very thick,->] (a) -- (b);
\draw[very thick,->] (a) -- (c);
\draw[very thick,->] (b) -- (d);
\draw[very thick,->] (c) -- (d);

\draw[thick,->] (-1,0) -- (a); 

\draw (-0.65,0.3) node {$\lambda_0$};
\draw (1.5,1.1) node {\large $e_1$}; 
\draw (1.5,-1.1) node {\large $e_2$}; 
\draw (4.5,1.1) node {\large $e_3$}; 
\draw (4.5,-1.1) node {\large $e_4$}; 

\end{tikzpicture}}
}
\end{center}
\caption{(a) A parallel link topology. (b) A topology to illustrate arbitrarily large $ C (\mc N)-R(\mc N,f^*)$.}
\label{fig:large-diff}
\end{figure}
\medskip
\begin{example}
For parallel link topologies, an example of which is illustrated in Figure~\ref{fig:large-diff} (a), one has that $$R(\mc N,f^*)=\sum_{e \in \mc E} \flowmaxe -\lambda_0= C (\mc N)-\lambda_0\,.$$
\end{example}\medskip
\begin{example}
\label{ex:large-strong-weak-gap}
Consider the topology shown in Figure~\ref{fig:large-diff} (b) with $\lambda_0=1$, $\floweq=[\epsilon, 1-\epsilon, \epsilon, 1-\epsilon]$ and $\flowmaxe=[1/\epsilon, 1, 1/\epsilon,1]$ for some $\epsilon \in (0,1)$. In this case, we have that $ C (\mc N)=1+1/\epsilon$ and $R(\mc N,\floweq)=\epsilon$. Therefore, $$ C (\mc N)-R(\mc N,\floweq)=1+1/\epsilon-\epsilon\,,$$ and hence $ C (\mc N)-R(\mc N,\floweq)$ grows unbounded as $\epsilon$ vanishes.
\end{example}\medskip
 
We conclude this section with the following observation. Using arguments along the lines of those employed in \cite{Como.Savla.ea:Wardrop-arxiv}, one can show that $ C (\mc N)-\lambda_0$ provides an upper bound on the strong resilience even if the locality constraint on the information used by the routing policies is removed, i.e., if one allows $G^v$ to depend on the full vector of current densities $\rho$, rather than on the local density vector $\rho^v$ only. Indeed, one might exhibit routing policies which are functions of the global density information $\rho$, for which the strong resilience is exactly $ C (\mc N)-\lambda_0$ using ideas developed in the paper~\cite{Como.Savla.ea:Wardrop-arxiv}. Hence, one may interpret the gap $ C (\mc N)-\lambda_0-R(\mc N,\floweq)$ as the strong resilience loss due to the locality constraint on the information available to the distributed routing policies. One could use Example~\ref{ex:large-strong-weak-gap} to again demonstrate arbitrarily large such loss due to the locality constraint on the information available to the routing policies. 
In fact, it is possible to consider intermediate levels of information available to the routing policies, which interpolate between the one-hop information of our current modeling of the distributed routing policies, and the global information described above. These results on the strong resilience are in stark contrast to our result on weak resilience in \cite{PartI}, where we showed that the weak resilience is unaffected by local information constraints on the routing policies.

\section{Robust equilibrium selection}
\label{sec:equilibrium-selection}
In this section, for a given flow network $\mc N$ satisfying Assumptions \ref{ass:acyclicity} and \ref{ass:flowfunction}, a constant inflow $\lambda_0\in[0,C(\mc N))$,  and locally responsive distributed routing policies with limit flow $f^*$, we shall address the issue of optimizing the strong resilience of the associated  dynamical flow network, $R(\mc N,f^*)$ with respect to $f^*$. First, in Section \ref{sec:opt1}, we shall address the issue of maximizing $R(f^*):=R(\mc N,f^*)$ over all admissible equilibrium flow vectors $f^*\in\mc F^*(\lambda_0)$, i.e., with the only constraints given by the link capacities and the conservation of mass. Then, in Section \ref{sec:opt2} we shall focus on the transportation network case, and address the problem of implementing a desired $f^*$, assuming that $f^*$ satisfies the additional constraint of being an equilibrium influenced by some static tolls. In Section \ref{sec:opt3}, we shall evaluate the gap between the maximum of $R(f^*)$ over all $f^*$, and a generic equilibrium $f^*$, and interpret it as the robustness price of anarchy with respect to $f^*$. We then distinguish between $R(f^*)$ and the commonly used metric of average delay associated to $f^*$, and then propose a convex optimization problem to solve for $f^*$ that takes into account average delay as well as strong resilience.

\subsection{Robust equilibrium flow selection as an optimization problem}\label{sec:opt1}
The robust equilibrium flow selection problem can be posed as an optimization problem as follows:
\begin{equation}
\label{eq:robust-eqm}
  R^*:=\sup_{f^*\in\mc F^*(\lambda_0)}R(f^*)\,,\end{equation}
where we recall that $\mc F^*(\lambda_0)$ is the set of admissible equilibrium flow vectors corresponding to the inflow $\lambda_0 \in [0,C(\mc N))$. 
Equation~\eqref{gammadef} implies that $ R(\floweq)$ is the minimum of a set of functions linear in $\floweq$, and hence is concave in $\floweq$. Since the closure of the constraint set $\mc F^*(\lambda_0)$ is a polytope, we get that the optimization problem stated in \eqref{eq:robust-eqm} is equivalent to a simple convex optimization problem. 
However, note that the objective function, $ R(\floweq)$ is non-smooth and one needs to use sub-gradient techniques, e.g., see \cite{Bertsekas:99}, for finding the optimal solution. 

\subsection{Using tolls for equilibrium implementation in transportation networks}\label{sec:opt2}
We now study the use of static tolls to influence the decisions of the drivers in order to get a desired emergent equilibrium condition for (unperturbed) transportation networks. The static tolls affect the driver decisions over a slower time scale at which the drivers update their preferences for global paths through the network. These global decisions are complemented by the \emph{fast-scale} node-wise route choice decisions characterized by Definition \ref{def:distributedroutingpolicy} and \ref{def:myopicpolicy}. The details of the analysis of transportation networks with such two time-scale driver decisions can be found in our companion paper~\cite{Como.Savla.ea:Wardrop-arxiv}. In particular, we show that when the time scales are sufficiently separated apart, then the network densities are attracted to a neighborhood of Wardrop equilibrium.
In this section, in order to highlight the relationship between static tolls and the resultant equilibrium point, we assume that the fast scale dynamics equilibrates quickly and focus only on the slow scale dynamics. 

We briefly describe the congestion game framework for transportation networks to formalize the equilibrium corresponding to the slow scale driver decision dynamics.
Let $\toll \in\mc R$ be the link-wise vector of tolls, with $\toll_e$ denoting the toll on link $e$. 
Assuming that $\toll$ is rescaled in such a way that one unit of toll corresponds to a unit amount of delay, the utility of a driver associated with link $e$ when the flow on it is $f_e$ is $-\left(T_e(f_e)+\toll_e \right)$, where $T_e(f_e)$ is the delay on link $e$ when the flow on it is $f_e$.
In order to formally describe the functions $T_e(f_e)$, we shall assume that each flow function $\mu_e$ satisfies Assumption~\ref{ass:flowfunction}, and additionally is strictly concave and satisfies $\mu_e'(0) < +\infty$.  Observe that the flow function described in Example~\ref{example:flowfunction} satisfies these additional assumptions. Since the flow on a link is the product of speed and density on that link, one can define the link-wise delay functions $T_e(f_e)$ by
\begin{equation}
\label{eq:delay-function}
\delayfunc_e(\flowsymb_e):= \left\{\begin{array}{ll} 
+\infty & \mbox{if } f_e \ge \flowmaxe, \\
     \mu_e^{-1}(f_e)/f_e & \mbox{if } f_e \in (0,\flowmaxe),\\
   1/\mu_e'(0) & \mbox{if } f_e = 0,
\end{array}  \right. \quad \forall e \in \edgeset.
\end{equation}
Let $\mc P$ be the set of distinct \emph{paths} from node $0$ to node $n$.
The utility associated with a path $p \in \mc P$ is $-\sum_{e \in \path} \left( \delayfunc_e\left(\flowsymb_e \right) + \toll_e \right)$.
Let $T(f)=\{T_e(f_e): \, e \in \mc E\}$ be the vector of link-wise delay functions.
We are now ready to define a \emph{toll-induced} equilibrium.
\begin{definition}[Toll-induced equilibrium]
\label{def:eqm}
For a given $\toll \in \mc R$, a toll-induced equilibrium is a vector $\floweq(\toll) \in \mc F^*$ that satisfies the following for all $p \in \paths$: 
\begin{equation*}
f_e > 0 \quad \forall e \in p \Longrightarrow \sum_{e \in \path} \left( \delayfunc_e\left(\flowsymb_e \right) + \toll_e \right) \leq \sum_{e \in q} \left( \delayfunc_e\left(\flowsymb_e \right) + \toll_e \right) \quad \forall q \in \mc P.
\end{equation*}
\end{definition}\medskip
Note that, $\floweq(\zerobf)$ corresponds to a Wardrop equilibrium, e.g., see \cite{Wardrop:52,Beckmann.McGuire.ea:56}, where $\zerobf$ is a vector all of whose entries are zero. For brevity in notation, we shall denote the Wardrop equilibrium by $\floweqwardrop$.
The following result guarantees the existence and uniqueness of a toll-induced equilibrium.

\begin{proposition}[Existence and uniqueness of toll-induced equilibrium]
\label{prop:unique}
Let $\mc N$ be a flow network satisfying Assumptions~\ref{ass:acyclicity} and \ref{ass:flowfunction} and $\lambda_0 \in [0,C(\mc N))$ a constant inflow. Assume additionally that the flow function $\mu_e$ is strictly concave and satisfies $\mu_e'(0) < +\infty$ for every link $e\in\mc E$. Then, for every toll vector $\toll \in\mc R$, there exists a unique toll-induced equilibrium $f^*(\toll) \in \mc F^*$. 
\end{proposition}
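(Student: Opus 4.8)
\emph{Proof sketch.} The plan is to run the classical Beckmann--McGuire--Winsten argument: exhibit a strictly convex potential whose unique minimizer over the feasible flow polytope is exactly the toll-induced equilibrium. First I would record the consequences of the extra hypotheses for the delay maps $T_e$. Since $\mu_e$ is $C^1$, strictly increasing and strictly concave with $\mu_e(0)=0$ and $0<\mu_e'(0)<+\infty$, its inverse $\mu_e^{-1}\colon[0,\flowmaxe)\to[0,+\infty)$ is $C^1$, strictly increasing, strictly convex, with $\mu_e^{-1}(0)=0$ and $\mu_e^{-1}(s)\to+\infty$ as $s\uparrow\flowmaxe$; hence $s\mapsto\mu_e^{-1}(s)/s$ is strictly increasing on $(0,\flowmaxe)$, tends to $1/\mu_e'(0)$ as $s\downarrow0$, and diverges as $s\uparrow\flowmaxe$. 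Thus $T_e$ as defined in \eqref{eq:delay-function} is continuous and strictly increasing on $[0,\flowmaxe)$ with $T_e(s)\to+\infty$ at capacity. Now set
\[
\Phi(f):=\sum_{e\in\mc E}\int_0^{f_e}\bigl(T_e(s)+\toll_e\bigr)\,\de s ,
\]
with the convention $\Phi(f)=+\infty$ if $f_e\ge\flowmaxe$ for some $e$. Each summand has strictly increasing derivative $T_e+\toll_e$, so $\Phi$ is strictly convex in $f$ and lower semicontinuous. Since $\lambda_0<C(\mc N)$, the set $\cl(\mc F^*(\lambda_0))$ is a nonempty compact convex polytope and $\Phi$ is finite on $\mc F^*(\lambda_0)$, so $\Phi$ attains a minimum over $\cl(\mc F^*(\lambda_0))$ at a unique point $f^*$.

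Next I would verify $f^*\in\mc F^*(\lambda_0)$, i.e.\ $f^*_e<\flowmaxe$ for every link. If $f^*_e=\flowmaxe$ for some saturated link, fix any $\bar f\in\mc F^*(\lambda_0)$ and move along $f_\theta:=(1-\theta)f^*+\theta\bar f$, which lies in $\mc F^*(\lambda_0)$ for all $\theta\in(0,1]$. The contributions of the unsaturated links to $\Phi(f^*)-\Phi(f_\theta)$ are $O(\theta)$, while the contribution of each saturated link is $\int_{f_{\theta,e}}^{\flowmaxe}(T_e(s)+\toll_e)\,\de s$, which dominates linearly in $\theta$ because $T_e(s)\to+\infty$ as $s\uparrow\flowmaxe$; hence $\Phi(f_\theta)<\Phi(f^*)$ for small $\theta$, a contradiction. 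Being interior, $f^*$ is a point of differentiability of $\Phi$ with $\nabla\Phi(f^*)_e=T_e(f^*_e)+\toll_e=:c^*_e$, so the first-order optimality condition over the convex set $\cl(\mc F^*(\lambda_0))$ reads $\la c^*,\,f-f^*\ra\ge0$ for all $f\in\cl(\mc F^*(\lambda_0))$.

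The final step is to show this variational inequality is equivalent to the condition of Definition~\ref{def:eqm} for the cost vector $c^*$. For ``$\Rightarrow$'', take a path $p$ with $f^*_e>0$ on all $e\in p$ and an arbitrary path $q$; then $f_e:=f^*_e-\theta(\1_{\{e\in p\}}-\1_{\{e\in q\}})$ is a feasible flow in $\cl(\mc F^*(\lambda_0))$ for $\theta>0$ small (using $f^*_e>0$ on $p$ and $f^*_e<\flowmaxe$ everywhere), and $\la c^*,f-f^*\ra=\theta\bigl(\sum_{e\in q}c^*_e-\sum_{e\in p}c^*_e\bigr)\ge0$ forces $\sum_{e\in p}c^*_e\le\sum_{e\in q}c^*_e$, which is precisely Definition~\ref{def:eqm}. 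For ``$\Leftarrow$'', decompose any candidate equilibrium $\hat f$ into path flows $\{\hat x_p\}_{p\in\mc P}$ (possible since $\mc T$ is acyclic): whenever $\hat x_p>0$ one has $\hat f_e\ge\hat x_p>0$ on every $e\in p$, so Definition~\ref{def:eqm} makes $p$ a minimum-cost path for $c_e:=T_e(\hat f_e)+\toll_e$, and decomposing an arbitrary feasible $f$ into paths and summing link costs along paths gives $\la c,f-\hat f\ra\ge0$ for all feasible $f$. Combining: $f^*$ is a toll-induced equilibrium (existence), and any toll-induced equilibrium satisfies the variational inequality, hence by convexity of $\Phi$ minimizes $\Phi$ over $\cl(\mc F^*(\lambda_0))$, hence equals $f^*$ (uniqueness).

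I expect the most delicate step to be showing the potential minimizer lies in the open set $\mc F^*(\lambda_0)$ rather than on the capacity boundary: this is exactly where the finiteness of the flow capacities --- forcing $T_e(s)\to+\infty$ as $s\uparrow\flowmaxe$ --- and the strict inequality $\lambda_0<C(\mc N)$ --- supplying an interior feasible point to move toward --- are both needed. The preliminary monotonicity of $T_e$ is where strict concavity of $\mu_e$ enters, and translating between the link-flow statement of Definition~\ref{def:eqm} and the variational inequality, while routine, must be carried out via the flow-decomposition theorem for acyclic graphs.
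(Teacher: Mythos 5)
Your argument is correct and is, in substance, the same route as the paper: the paper likewise reduces everything to the properties of $T_e$ (continuity, strict monotonicity, $T_e(0)>0$, divergence at capacity, derived from strict concavity and $\mu_e'(0)<+\infty$) and then simply invokes Theorems 2.4 and 2.5 of Patriksson, which are exactly the Beckmann-potential/variational-inequality existence and uniqueness results you reprove in full (your only cosmetic blemish is the convention $\Phi=+\infty$ on the capacity boundary, which can hurt lower semicontinuity; using the natural extended-valued integral avoids this and your boundary-exclusion argument then applies verbatim).
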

\begin{proof}
It follows from Assumption~\ref{ass:flowfunction}, strict concavity and the assumption $\mu_e'(0) < +\infty$ on the flow functions that, for all $e \in \mc E$, the delay function $\delayfunc_e(f_e)$, as defined by \eqref{eq:delay-function}, is  continuous, strictly increasing, and is such that $T_e(0)>0$.
The Proposition then follows by applying Theorems 2.4 and 2.5 from \cite{Patriksson:94}. 
\end{proof}\medskip

In this subsection, to illustrate the proof of concept, we will focus on equilibrium flows $f^*$ each of whose components is strictly positive. The results for a generic $f^* \in \mc F^*(\lambda_0)$ follow along similar lines.
Let $A \in \{0,1\}^{\mc P \times \mc E}$ be the path-link incidence matrix, i.e., for all $e \in \edgeset$ and $p \in \paths$, $A_{p,e}=1$ if $e \in p$ and zero otherwise. 
Definition~\ref{def:eqm} implies that for $f^*(\toll)\in\mc R$, with $f^*_e(\toll)>0$ for all $e\in\mc E$, to be the toll-induced equilibrium corresponding to the toll vector $\toll\in\mc R$ is equivalent to $A \left(T(f^*(\toll)) + \toll \right)=\nu \onebf$, for some $\nu >0$. We shall use this fact in the next result, where we compute tolls to get a desired equilibrium.

\smallskip

\begin{proposition}[Tolls for desired equilibrium]
\label{prop:toll}
Let $\mc N$ be a flow network satisfying Assumptions~\ref{ass:acyclicity} and \ref{ass:flowfunction} and $\lambda_0 \in [0,C(\mc N))$ a constant inflow. Assume additionally that the flow function $\mu_e$ is strictly concave and satisfies $\mu_e'(0) < +\infty$ for every link $e\in\mc E$. Assume that the Wardrop equilibrium $\floweqwardrop$ is such that $\floweqwardrop_e>0$ for all $e\in\mc E$. Let $\floweq \in\mc F^*(\lambda_0)$, with $f^*_e >0$ for all $e\in\mc E$, be the desired toll-induced equilibrium flow vector. Define $\toll(f)\in\mc R$ by
\begin{equation}\label{tolldef}
\toll(f)= \left( \max_{e \in \edgeset} \frac{T_e(f_e)}{T_e(\floweqwardrop_e)} \right) T(\floweqwardrop) - T(f)\,.
\end{equation}
Then $f^*$ is the desired toll-induced equilibrium associated to the toll vector $\toll(f^*)$.
\end{proposition}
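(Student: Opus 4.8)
The plan is to verify \eqref{tolldef} directly: show that $\toll(f^*)$ is a \emph{nonnegative} toll vector and that $f^*$ satisfies the toll-modified Wardrop conditions of Definition~\ref{def:eqm} relative to $\toll(f^*)$; uniqueness then comes for free from Proposition~\ref{prop:unique}. Since by hypothesis $f^*_e>0$ for every $e\in\mc E$, every path $p\in\mc P$ carries strictly positive flow on each of its links, so the characterization recalled just before the statement applies verbatim: $f^*$ is the toll-induced equilibrium for $\toll(f^*)$ if and only if $\toll(f^*)\in\mc R$ and $A\bigl(T(f^*)+\toll(f^*)\bigr)=\nu\onebf$ for some $\nu>0$. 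Note also that $f^*\in\mc F^*(\lambda_0)$ is assumed, so the only things left to check are these two.

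First I would record the analogous identity at the toll-free Wardrop equilibrium $\floweqwardrop=f^*(\zerobf)$, which exists and is unique by Proposition~\ref{prop:unique}. Because $\floweqwardrop_e>0$ for all $e$, each path carries positive flow on all of its links, so applying Definition~\ref{def:eqm} with $\toll=\zerobf$ to an arbitrary ordered pair of paths $p,q\in\mc P$, and then exchanging their roles, forces $\sum_{e\in p}T_e(\floweqwardrop_e)=\sum_{e\in q}T_e(\floweqwardrop_e)$; call this common value $\nu^{\mathrm W}$. Since $T_e(0)>0$ and $T_e$ is increasing, $T_e(\floweqwardrop_e)>0$ for every $e$, hence $\nu^{\mathrm W}>0$, and in matrix form $A\,T(\floweqwardrop)=\nu^{\mathrm W}\onebf$.

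Next, set $c:=\max_{e\in\mc E}T_e(f^*_e)/T_e(\floweqwardrop_e)$, which is finite and strictly positive since all the quantities $T_e(f^*_e)$ and $T_e(\floweqwardrop_e)$ are finite and positive (using $0<f^*_e<\flowmaxe$, $0<\floweqwardrop_e<\flowmaxe$ and \eqref{eq:delay-function}). By \eqref{tolldef}, $\toll(f^*)=c\,T(\floweqwardrop)-T(f^*)$, so for each link $e$ we have $\toll_e(f^*)=c\,T_e(\floweqwardrop_e)-T_e(f^*_e)\ge0$ precisely because $c$ is the maximum of the ratios $T_e(f^*_e)/T_e(\floweqwardrop_e)$; hence $\toll(f^*)\in\mc R$. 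Finally,
\[
A\bigl(T(f^*)+\toll(f^*)\bigr)=A\bigl(T(f^*)+c\,T(\floweqwardrop)-T(f^*)\bigr)=c\,A\,T(\floweqwardrop)=c\,\nu^{\mathrm W}\,\onebf,
\]
which is $\nu\onebf$ with $\nu:=c\,\nu^{\mathrm W}>0$. By the characterization above, $f^*$ is a toll-induced equilibrium for $\toll(f^*)$, and by the uniqueness part of Proposition~\ref{prop:unique} it is \emph{the} toll-induced equilibrium associated to $\toll(f^*)$.

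There is no deep obstacle here; the only point requiring care — and the reason the scalar factor $\max_{e}T_e(f^*_e)/T_e(\floweqwardrop_e)$ appears in \eqref{tolldef} rather than, say, the factor $1$ — is that the tolls must be nonnegative (lie in $\mc R$), and this particular scaling of $T(\floweqwardrop)$ is exactly what makes $c\,T_e(\floweqwardrop_e)\ge T_e(f^*_e)$ hold at every link while keeping all origin–destination path costs equal. One should also note that the argument extends, mutatis mutandis, to a generic $f^*\in\mc F^*(\lambda_0)$ with some vanishing components (the case set aside in the text): one restricts the equalized-cost condition to the paths actually carrying flow and only requires the other path costs to be no smaller, which relaxes rather than tightens the inequalities used above.
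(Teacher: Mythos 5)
Your proposal is correct and follows essentially the same route as the paper's proof: both verify that $\toll(f^*)=c\,T(\floweqwardrop)-T(f^*)$ with $c=\max_{e}T_e(f^*_e)/T_e(\floweqwardrop_e)$ makes $A\bigl(T(f^*)+\toll(f^*)\bigr)=c\,\nu_1\onebf$ using $A\,T(\floweqwardrop)=\nu_1\onebf$ from the toll-free Wardrop conditions. Your write-up merely spells out the details the paper leaves as ``simple algebra,'' in particular the nonnegativity of $\toll(f^*)$ and the positivity and finiteness of the constants, which is a welcome but not substantively different elaboration.
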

\begin{proof}
Since $f^W$ is the Wardrop equilibrium, corresponding to the toll vector $\toll = \zerobf$, we have that
\be
\label{eq:KKT-notoll}
A T(\floweqwardrop) = \nu_1 \onebf,
\ee
for some $\nu_1>0$. For $\floweq$ to be the toll-induced equilibrium associated to the toll vector $\toll\in\mc R$, one needs to find $\nu_2 >0$ such that
\be
\label{eq:KKT-toll}
A \left( T(\floweq) + \toll\right)= \nu_2 \onebf.
\ee 
Using \eqref{eq:KKT-notoll} and simple algebra, one can verify that \eqref{eq:KKT-toll} is satisfied with $\toll(\floweq)$ as defined in (\ref{tolldef}) and $\nu_2=\nu_1 \cdot \left( \max_{e \in \edgeset} \frac{T_e(f_e^*)}{T_e(\floweqwardrop_e)} \right)$.
\end{proof}

\smallskip

\begin{remark}
The toll vector yielding a desired equilibrium operating condition is not unique. In fact, any toll of the form $\toll(\floweq)=cT(\floweqwardrop) - T(\floweq)$, with $c \geq \max\{T_e(\floweqe)/{T_e(\floweqwardrop_e)}:\,e \in \edgeset\}$ will induce $\floweq$ as the toll-induced equilibrium.
 Proposition~\ref{prop:toll} gives just one such toll vector.
\end{remark}

\subsection{The robustness price of anarchy}\label{sec:opt3}
Conventionally, transportation networks have been viewed as static flow networks, where a given equilibrium traffic flow is the outcome of driver's selfish behavior in response to the delays associated with various paths and the incentive mechanisms in place. The price of anarchy~\cite{Roughgarden:05} has been suggested as a metric to measure how sub-optimal a given equilibrium is with respect to the societal optimal equilibrium, where the societal optimality is related to the average delay faced by a driver. In the context of robustness analysis of transportation networks, it is natural to consider societal optimality from the robustness point of view, thereby motivating a notion of the robustness price of anarchy. Formally, for a $\floweq \in \mc F^*(\lambda_0)$, define the robustness price of anarchy as $\priceofanarchy \left(\floweq \right):=  R^* -  R \left(\floweq \right)$.
It is worth noting that, for a parallel topology, we have that $ R^* = R\left(\floweq \right)=\sum_{e \in \edgeset} \flowmaxe - \lambda_0$ for all $\floweq$. That is, the strong resilience is independent of the equilibrium operating condition and hence, for a parallel topology, $\priceofanarchy \left(\floweq \right) \equiv 0$. However, for a general topology and a general equilibrium, this quantity is non-zero. This can be easily justified, for example, for robustness price of anarchy with respect to the Wardrop equilibrium:
a Wardrop equilibrium is determined by the delay functions $T_e(f_e)$ as well as the topology of the network, whereas the maximizer of $R(f^*)$ depends only on the topology and the link-wise flow capacities of the network, as implied by the optimization problem in \eqref{eq:robust-eqm}. 
In fact, as the following example illustrates, for a non-parallel topology, the robustness price of anarchy with respect to Wardrop equilibrium can be arbitrarily large.
\begin{example}[Arbitrarily large robustness price of anarchy with respect to Wardrop equilibrium]
\label{example:anarchy}
Consider the network topology shown in Figure~\ref{fig:A5-justification}. Let the link-wise flow functions be given by Equation~\eqref{example:flowfunction}. The delay function is then given by $T_e(0)=\left(a_e \flowmaxe\right)^{-1}$, $T_e(f_e)=-\frac{1}{a_e f_e} \log (1-f_e/\flowmaxe)$ for $f_e \in (0,\flowmaxe)$ and $T_e(f_e)=+\infty$ for $f_e \ge \flowmaxe$.
Fix some $\epsilon \in (0,1)$ and let $\lambda_0=1/\epsilon$. Let the parameters of the flow functions be given by $\flowmax_{e_1}=\flowmax_{e_2}=1/\epsilon+\epsilon$, $\flowmax_{e_3}=\flowmax_{e_4}=1/(2 \epsilon) + \epsilon/2$, $a_1=1$, $a_2=a_3=a_4=\left(\frac{3 \epsilon}{1-\epsilon}\right) \log\left(\frac{\epsilon + \epsilon^2}{1+\epsilon^2} \right)/\log \left( \frac{1+\epsilon^2-\epsilon}{1+\epsilon^2}\right)$. For these values of the parameters, one can verify that the unique Wardrop equilibrium is given by $\floweqwardrop=[1 \quad 1/\epsilon-1 \quad 1/(2 \epsilon)-1/2 \quad 1/(2 \epsilon)-1/2]^T$. The strong resilience of $f^W$ is then given by $R(\mc N,\floweqwardrop)=\min\{2/\epsilon+2 \epsilon - 1/\epsilon, 1/\epsilon+\epsilon - (1/\epsilon-1)\}=1+\epsilon$. One can also verify that, for this case, $R^*=1/\epsilon+2 \epsilon$ which would correspond to $\floweq=[1/\epsilon \quad 0 \quad 0 \quad 0]^T$. Therefore, $P(\floweqwardrop)=1/\epsilon + 2 \epsilon - (1+ \epsilon)=1/\epsilon+\epsilon-1$ which tends to $+ \infty$ as $\epsilon \to 0^+$.
\end{example}

The above example provides a strong motivation to take robustness into account while selecting the equilibrium operating condition for the network. However, conventionally, the equilibrium selection problem for transportation networks has been primarily motivated from the point-of-view of  minimizing average delay. The average delay associated with an equilibrium $f^*$ is defined as:
\begin{equation}
\label{eq:average-delay}
D(f^*):=\sum_{e \in \mc E}f_e^* T_e(f^*_e)/\lambda_0. 
\end{equation}
The following simple example illustrates that the maximizers of $-D(f^*)$ and $R(f^*)$ are not necessarily the same.

\begin{example}
\label{example:robust-optimization}
Consider the network topology shown in Figure~\ref{fig:A5-justification}. Let the link-wise flow functions be given by Equation~\eqref{example:flowfunction}. Let the parameters of the flow function be given by: $a_{e_1}=0.01$, $a_{e_2}=a_{e_3}=a_{e_4}=10$ and $\flowmax_{e_1}=\flowmax_{e_2}=2$, $\flowmax_{e_3}=\flowmax_{e_4}=0.75$. Let $\lambda_0=2$.
The equilibrium maximizing $R(f^*)$ is $\floweq=[2 \quad 0 \quad 0 \quad 0]^T$ and the maximum strong resilience is found to be $R^*=1.5$. The minimum value of $D(f^*)$ over all $f^* \in \mc F^*(\lambda_0)$ is $15.17$, and the corresponding equilibrium $f^*$ and the value of strong resilience are $[0.5 \quad 1.5 \quad 0.75 \quad 0.75]^T$ and $0.5$ respectively. Note that the maximizers of $-D(f^*)$ and $R(f^*)$ are not necessarily the same. Therefore, a reasonable optimization problem should take into account average delay as well as network resilience. Accordingly, we propose a modified optimization problem as follows:
\begin{equation}
\label{eq:new-optimization}
  \begin{split}
    \minimize  \enspace  & D(f^*) \\
    \subj      \enspace & f^* \in \mc F^*(\lambda_0), \\
    \enspace & R(f^*) \geq b,
    \end{split}
\end{equation}
where $b \in [0,R^*]$.
Assumption~\ref{ass:flowfunction} and Equation~\eqref{eq:average-delay} imply that $D(f^*)$ is convex. Therefore, taking into account the expression for $R(f^*)$, \eqref{eq:new-optimization} is still a convex optimization problem.
Figure~\ref{fig:optim-plot} plots the outcome of this optimization as $b$ is varied from $0$ to $R^*$.
In all the cases, we solved \eqref{eq:new-optimization} using \texttt{CVX}, a package for specifying and solving convex programs \cite{CVX}. 
\end{example}

\begin{figure}
\begin{center}
\subfigure[] {\includegraphics[width=7cm,height=5.5cm]{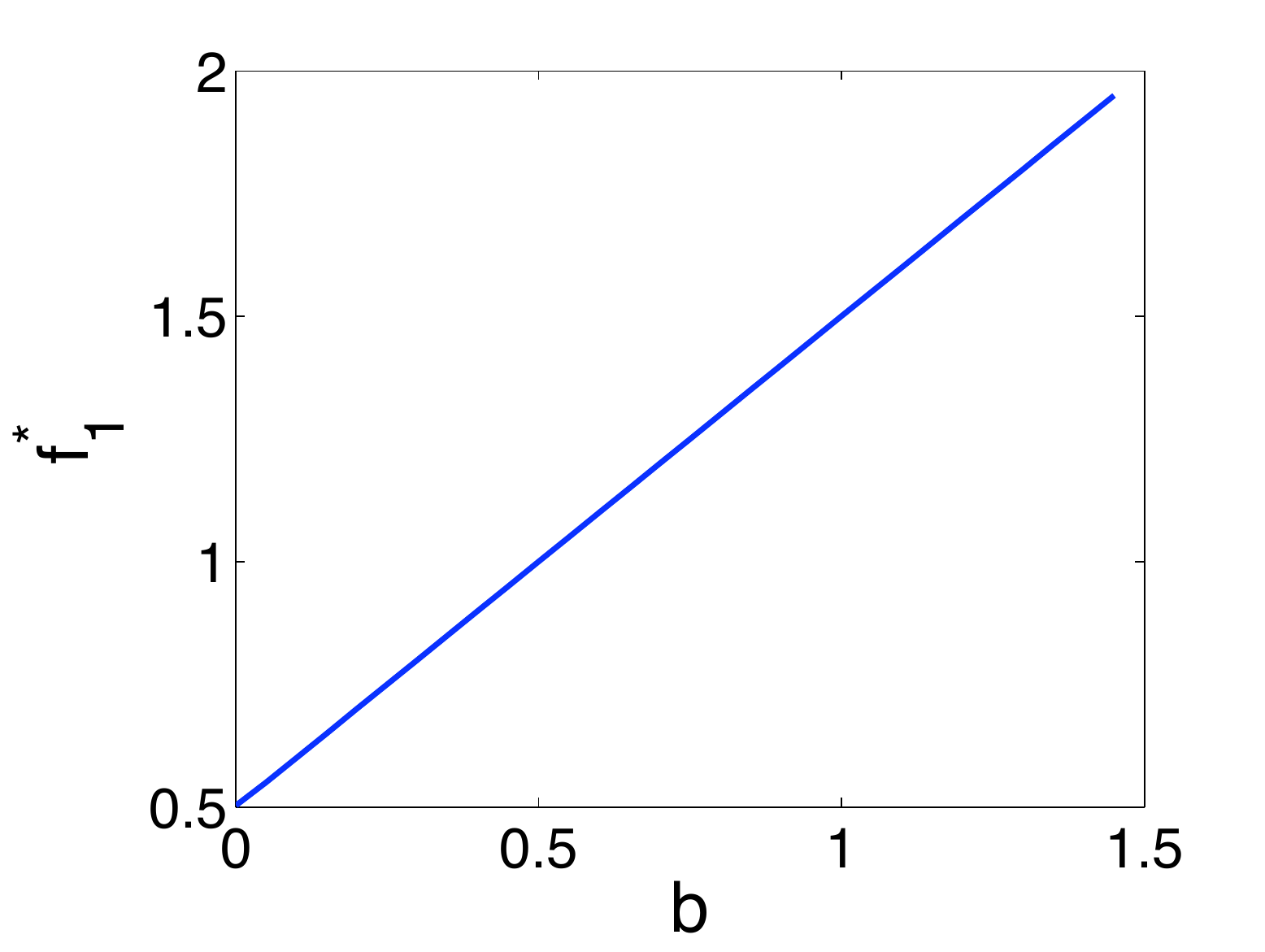}} 
\subfigure[] {\includegraphics[width=7cm,height=5.5cm]{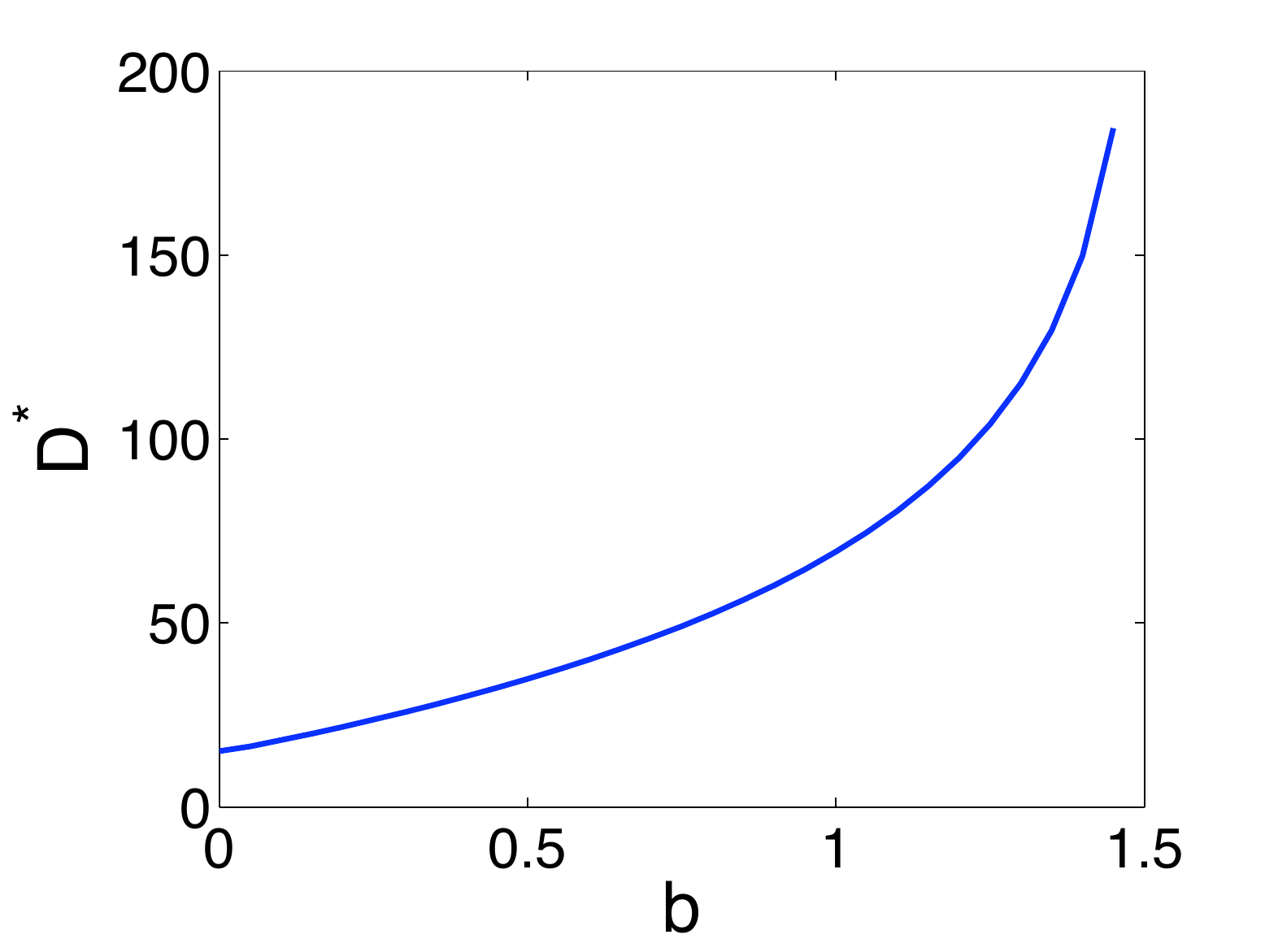}}
\end{center}
\caption{\label{fig:optim-plot} Plots of the solution of the optimization in \eqref{eq:new-optimization} for parameters specified in Example~\ref{example:robust-optimization}, as $b$ is increased from $0$ to $R^*=1.5$: (a) $f^*_1$ is the flow on link $e_1$ corresponding to $f^*$ optimizing \eqref{eq:new-optimization}; note that $f_2^*=\lambda_0-f_1^*$, and $f_3^*=f_4^*=f_2^*/2$, (b) $D^*$ is the solution of \eqref{eq:new-optimization}.}
\end{figure}


\section{Cascaded failures}
\label{sec:sim}
In this section, through numerical experiments, we study the case when the flow functions are set to the ones commonly accepted in the transportation literature, e.g., see \cite{Garavello.Piccoli:06}. In transportation literature, the flow functions are defined over a finite interval of the form $[0,\densitymaxe]$, where $\densitymaxe$ is the maximum traffic density that link $e$ can handle.
Additionally, $\mu_e$ is assumed to be strictly concave and achieves its maximum in $(0,\densitymaxe)$. 
For example, consider the following:

\begin{equation}
\label{eq:mu-func-sim-example}
\mu_e(\rho_e)= \frac{4\flowmaxe \rho_e (\densitymaxe-\rho_e) }{(\densitymaxe)^2}, \quad \rho_e \in [0,\densitymaxe].\end{equation}
An important implication of the finite capacity on the traffic densities is the possibility of cascaded \emph{spill-backs} traveling upstream as follows. When the density on a link reaches its capacity, its outflow permanently becomes zero and hence the link is effectively cut out from the network. When all the outgoing links from a particular node are cut out, it makes the outflow on all the incoming links to that node zero. Eventually, these \emph{upstream} links might possibly reach their capacity on the density and cutting themselves off permanently and cascading the effect further upstream. We shall show how such cascaded effects possibly reduce the resilience.

Another important differentiating feature of the flow functions given by \eqref{eq:mu-func-sim-example} with respect to the flow functions satisfying Assumption~\ref{ass:flowfunction} is that the flow functions corresponding to \eqref{eq:mu-func-sim-example} are not strictly increasing. As a result, one cannot readily claim that the locally responsive distributed routing policies are maximally robust for this case. However, we illustrate via simulations that, with additional assumptions,  the locally responsive distributed routing policies considered in this paper could possibly be maximally robust. 
In these simulations, we also study the effect of the flow functions given by \eqref{eq:mu-func-sim-example} on the \emph{weak resilience} of the network, which was formally defined in \cite{PartI}. In simple words, weak resilience of the network is defined as the infimum sum of the link-wise magnitude of all the disturbances under which the outflow from the destination node is asymptotically zero. In \cite[Proposition 1]{PartI}, we showed that the weak resilience of the dynamical flow network with the flow functions satisfying Assumption~\ref{ass:flowfunction} is upper bounded by its  min-cut capacity. It is easy to show that this upper bound on weak resilience also holds when the flow functions are the ones given by \eqref{eq:mu-func-sim-example}.
 
For the simulations, we selected the following parameters:
\begin{itemize}
\item the graph topology $\graph$ shown in Figure~\ref{fig:graph}.

\begin{figure}
\begin{center}
\scalebox{1.0}
{
\begin{tikzpicture}

\path (0,0) node(a) [circle,draw,fill=blue!50!white] {$0$}
		(2,1.5) node (b) [circle,draw,fill=blue!50!white] {$1$}
	(2, 0) node (c) [circle,draw,fill=blue!50!white] {$2$}
   (2, -1.5) node (d) [circle,draw,fill=blue!50!white] {$3$}
   (4, 1) node (e) [circle,draw,fill=blue!50!white] {$4$}
   (4, -1) node (f) [circle,draw,fill=blue!50!white] {$5$}
    (6, 2.5) node (g) [circle,draw,fill=blue!50!white] {$6$}
    (6, -2.5) node (h) [circle,draw,fill=blue!50!white] {$7$}
    (8, 0) node (i) [circle,draw,fill=blue!50!white] {$8$};
	
\draw[thick,->] (-1,0) -- (a); 
\draw[very thick,->] (a) -- (b);
\draw[very thick,->] (a) -- (c);
\draw[very thick,->] (a) -- (d);
\draw[very thick,->] (b) -- (e);
\draw[very thick,->] (c) -- (e);
\draw[very thick,->] (c) -- (f);
\draw[very thick,->] (d) -- (f);
\draw[very thick,->] (e) -- (i);
\draw[very thick,->] (f) -- (i);
\draw[very thick,->] (e) -- (g);
\draw[very thick,->] (f) -- (h);
\draw[very thick,->] (b) -- (g);
\draw[very thick,->] (d) -- (h);
\draw[very thick,->] (g) -- (i);
\draw[very thick,->] (h) -- (i);

\draw (-0.65,0.3) node {$\lambda_0$};	
\draw (1,1.1) node {$e_1$};
\draw (1,0.2) node {$e_2$};	
\draw (1,-1.1) node {$e_3$};	
\draw (3.2,1.4) node {$e_4$};
\draw (2.8,0.7) node {$e_5$};	
\draw (3,-0.3) node {$e_6$};	
\draw (2.9,-1.1) node {$e_7$};	
\draw (3.8,-2.2) node {$e_8$};	
\draw (3.8,2.2) node {$e_9$};
\draw (5.2,1.6) node {$e_{10}$};	
\draw (5.2,-1.6) node {$e_{11}$};	
\draw (6,0.7) node {$e_{12}$};	
\draw (6,-0.7) node {$e_{13}$};	
\draw (7.2,1.5) node {$e_{14}$};	
\draw (7.2,-1.5) node {$e_{15}$};		
	
\end{tikzpicture}
}
\end{center}
\caption{The graph topology used in simulations.}
\label{fig:graph}
\end{figure}
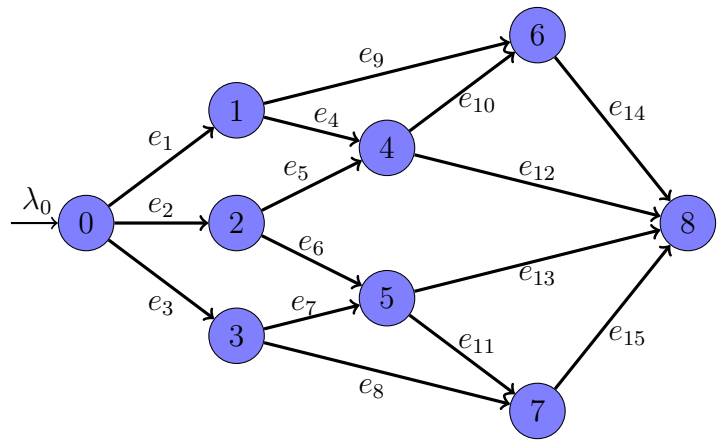

\item $\lambda_0=3$.
\item let $\densitymax_{e}=3$ for all $e\in\mc E$, and flow capacities given by 
$\flowmax_{e_1}=\flowmax_{e_2}=\flowmax_{e_3}=2.5$, $\flowmax_{e_4}=0.9$, $\flowmax_{e_5}=1.75$, $\flowmax_{e_6}=\flowmax_{e_{11}}=\flowmax_{e_{13}}=1$, $\flowmax_{e_7}=\flowmax_{e_8}=0.7$, $\flowmax_{e_9}=0.4$, $\flowmax_{e_{10}}=\flowmax_{e_{12}}=1.5$, $\flowmax_{e_{14}}=2$, and $\flowmax_{e_{15}}=1.6$. The link-wise flow functions are as given in \eqref{eq:mu-func-sim-example}, if $e \in \mc E_n^-$ or if $\rho < \rho^{\max}_{e'}$ for at least one \emph{downstream} edge $e'$, i.e., $e' \in \mc E$ such that $e \in \mc E_v^-$ and $e' \in \mc E_v^+$ for some $v \in \until{n-1}$, and the flow functions are uniformly zero otherwise;
\item the equilibrium flow $f^*$ has components $f^*_{e_1}=f^*_{e_3}=f^*_{e_6}=0.5$, $f^*_{e_2}=2$, $f^*_{e_4}=f^*_{e_{13}}=0.3$, $f^*_{e_5}=1.5$, $f^*_{e_7}=f^*_{e_8}=0.25$, $f^*_{e_9}=0.2$, $f^*_{e_{10}}=f^*_{e_{12}}=0.9$, $f^*_{e_{11}}=0.2$, $f^*_{e_{13}}=0.3$, $f^*_{e_{14}}=1.1$, and $f^*_{e_{15}}=0.7$;
\item the route choice function is as follows:
\begin{equation*}
G^v_e(\rho^v)=\frac{\floweqe \exp(-\ilogitconst(\rho_e-\rho_e^{*})) \1_{[0,\rho_e^{\mathrm{max}}]}( \rho_e) }{\sum_{j\in\mc E_v^+} \floweq_j \exp(-\ilogitconst(\rho_j-\rho_j^{*})) \1_{[0, \rho_j^{\mathrm{max}}]}(\rho_j)},
\end{equation*}
where $\ilogitconst$ will be a variable parameter for the simulations. Note that this is a modified version of the route choice function given by \eqref{eq:routing-example}. The modification is done to respect the finite traffic density constraint on the links. 
\end{itemize}

One can verify that, with these parameters, the minimum node residual capacity, and hence an upper bound on the strong resilience, as defined by \eqref{gammadef} is $0.75$. One can also verify that the maximum flow capacity of the network, and hence an upper bound on the weak resilience, is $5.2$.

\subsection{Effect of $\ilogitconst$ on the strong resilience}
Consider an admissible perturbation such that $\tilde \mu_{e_{10}}=\frac{8}{15} \mu_{e_{10}}$ and $\tilde \mu_{e_k} = \mu_k$ for all $k \in \{1,\ldots,15\}\setminus \{10\}$. As a result, $\delta_{e_{10}}=0.7$ and $\delta_{e_k}=0$ for all $k \in \{1,\ldots,15\}\setminus \{10\}$. Therefore, the magnitude of the perturbation is $\delta=0.7$. Note that this value is less than the minimum node residual capacity of the network.
We found that $\lim_{t \to \infty} \lambda_{e_8}(t)=0$ for all $\ilogitconst < 0.25$, and $\lim_{t \to \infty} \lambda_{e_8}(t)=\lambda_0=3$ for all $\ilogitconst \geq 0.25$. 
The role of $\ilogitconst$ in the strong resilience is best understood by concentrating on a parallel topology consisting of edges $e_{10}$ and $e_{12}$ with arrival rate $\lambda_{e_4}^*$. Using similar techniques as in the proof of Theorem~\ref{maintheo}, one can show the existence of a new equilibrium for this \emph{local} system. However, this equilibrium is not attractive from a configuration where at least one of $\tilde \rho_{e_{10}}$ or $\tilde \rho_{e_{12}}$ is at $\rho^{\max}_{e_{10}}$ or $\rho^{\max}_{e_{12}}$, respectively. For $\ilogitconst < 0.25$, $\tilde \rho_{e_{10}}$ reaches $\rho^{\max}_{e_{10}}$, whereas for $\ilogitconst \ge 0.25$, neither $\tilde \rho_{e_{10}}$ nor $\tilde \rho_{e_{12}}$ hit the maximum density capacity and the system is attracted towards the new equilibrium.

\subsection{Effect of cascaded shutdowns on the weak resilience}
Consider an admissible disturbance such that $\tilde \mu_{e_4}=\frac{2}{9} \mu_{e_4}$, $\tilde \mu_{e_5} = \frac{23}{35} \mu_{e_5}$, $\tilde \mu_{e_6}=\frac{4}{5} \mu_6$, $\tilde \mu_{e_7}=\frac{2}{7} \mu_{e_7}$, $\tilde \mu_{e_8}=\frac{2}{7} \mu_{e_8}$, $\tilde \mu_{e_9}=\frac{1}{2} \mu_{e_9}$, $\tilde \mu_{e_{10}}=\frac{3}{5} \mu_{e_{10}}$, $\tilde \mu_{e_{12}}=\frac{8}{15} \mu_{e_{12}}$ and $\tilde \mu_k = \mu_k$ for $k=\{1,2,3,11,13,14,15\}$.
As result, $\delta_{e_4}=0.7$, $\delta_{e_5}=0.6$, $\delta_{e_6}=0.2$, $\delta_{e_7}=0.5$, $\delta_{e_8}=0.5$, $\delta_{e_9}=0.2$, $\delta_{e_{10}}=0.6$, $\delta_{e_{12}}=0.7$ and $\delta_{e_k}=0$ for $k=\{1,2,3,11,13,14,15\}$. Therefore, $\delta=4$, which is less than the min-cut flow capacity of the network. For this case, it is observed that, $\lim_{t \to \infty} \lambda_{e_8}(t)=0$ independent of the value of $\ilogitconst$. 
This can be explained as follows. For the given disturbance, we have that $\tilde f^{\max}_{e_{10}}+\tilde f^{\max}_{e_{12}} = 1.7 < 1.8 = f^*_{e_{10}} + f^*_{e_{12}}$. Therefore, after finite time $t_1$, we have that $\tilde \rho_{e_{10}}(t)=\rho^{\max}_{e_{10}}$ and $\tilde \rho_{e_{12}}(t)=\rho^{\max}_{e_{12}}$ for all $t \geq t_1$. As a consequence, we have that, $\tilde f_{e_{4}}(t)=0$ and $\tilde f_{e_{5}}(t)=0$ for all $t \geq t_1$. One can repeat this argument to conclude that, for the given disturbance, after finite time, $\tilde \rho_{e_{k}}$ for $k=1,\ldots,9$ reach and remain at their maximum density capacities. As a consequence, after such a finite time, $\tilde f_{e_1}(t)+\tilde f_{e_2}(t)+\tilde f_{e_3}(t)=0$ and hence, $\lim_{t \to \infty} \lambda_{e_8}(t)=0$, i.e., the network is not partially transferring. This is also illustrated in Figure~\ref{fig:cascade} which plots the flow through some of the links of the network as a function of time.
This example illustrates that the cascaded effects can potentially reduce the weak resilience of a dynamical flow network.

\begin{figure}
\begin{center}
\subfigure {\includegraphics[width=5.4cm,height=4.8cm]{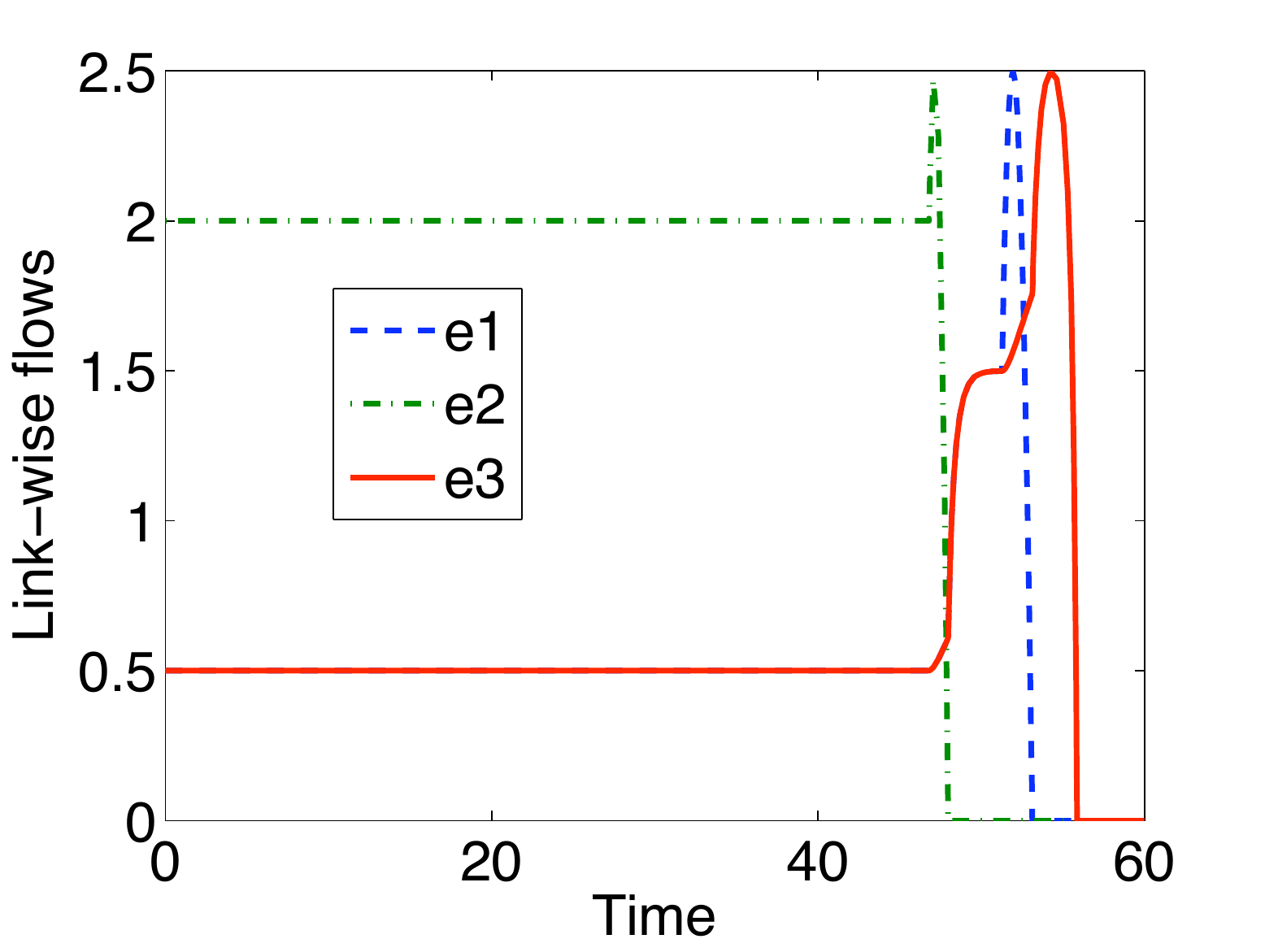}} 
\subfigure {\includegraphics[width=5.4cm,height=4.8cm]{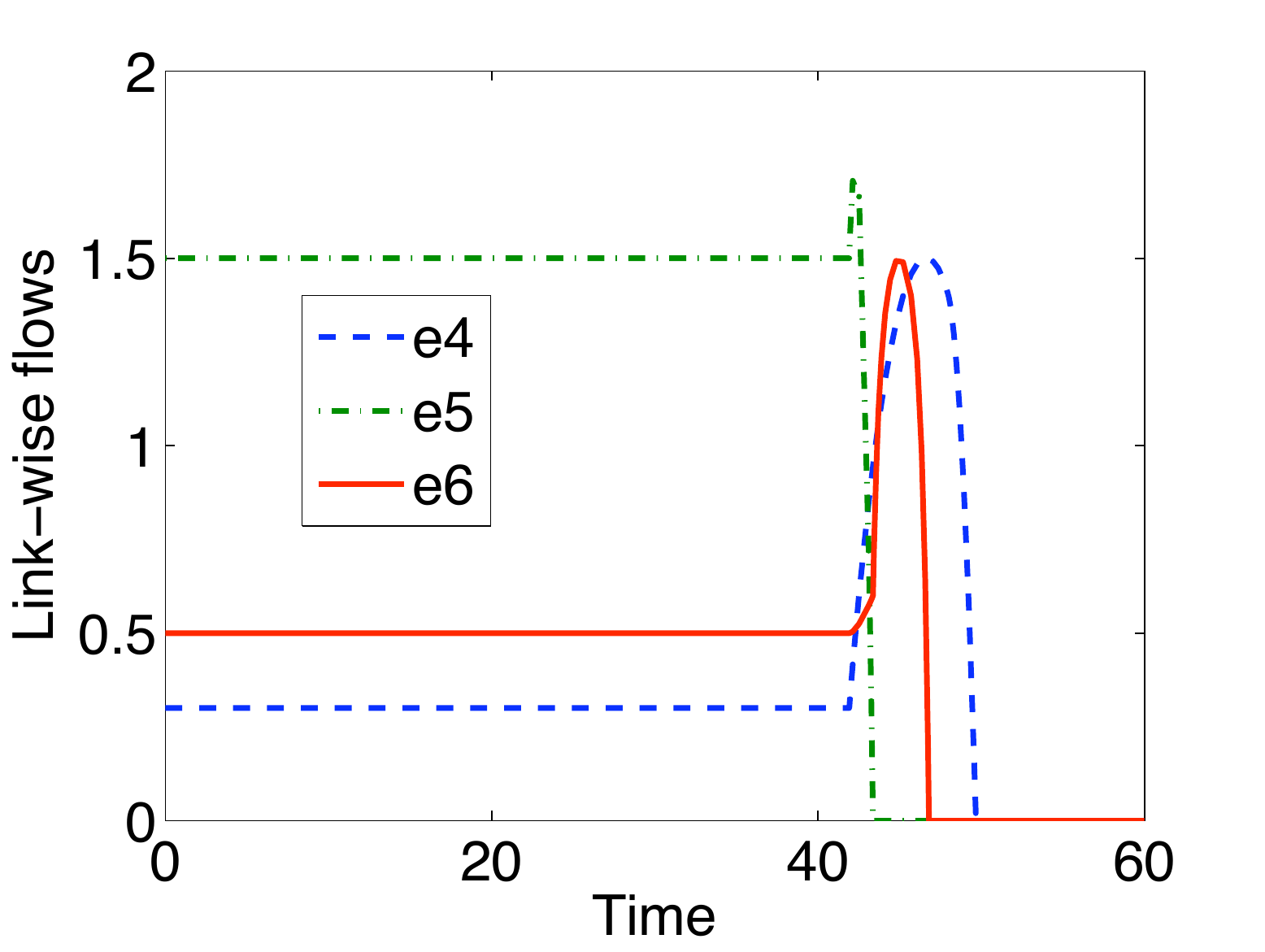}}
\subfigure {\includegraphics[width=5.4cm,height=4.8cm]{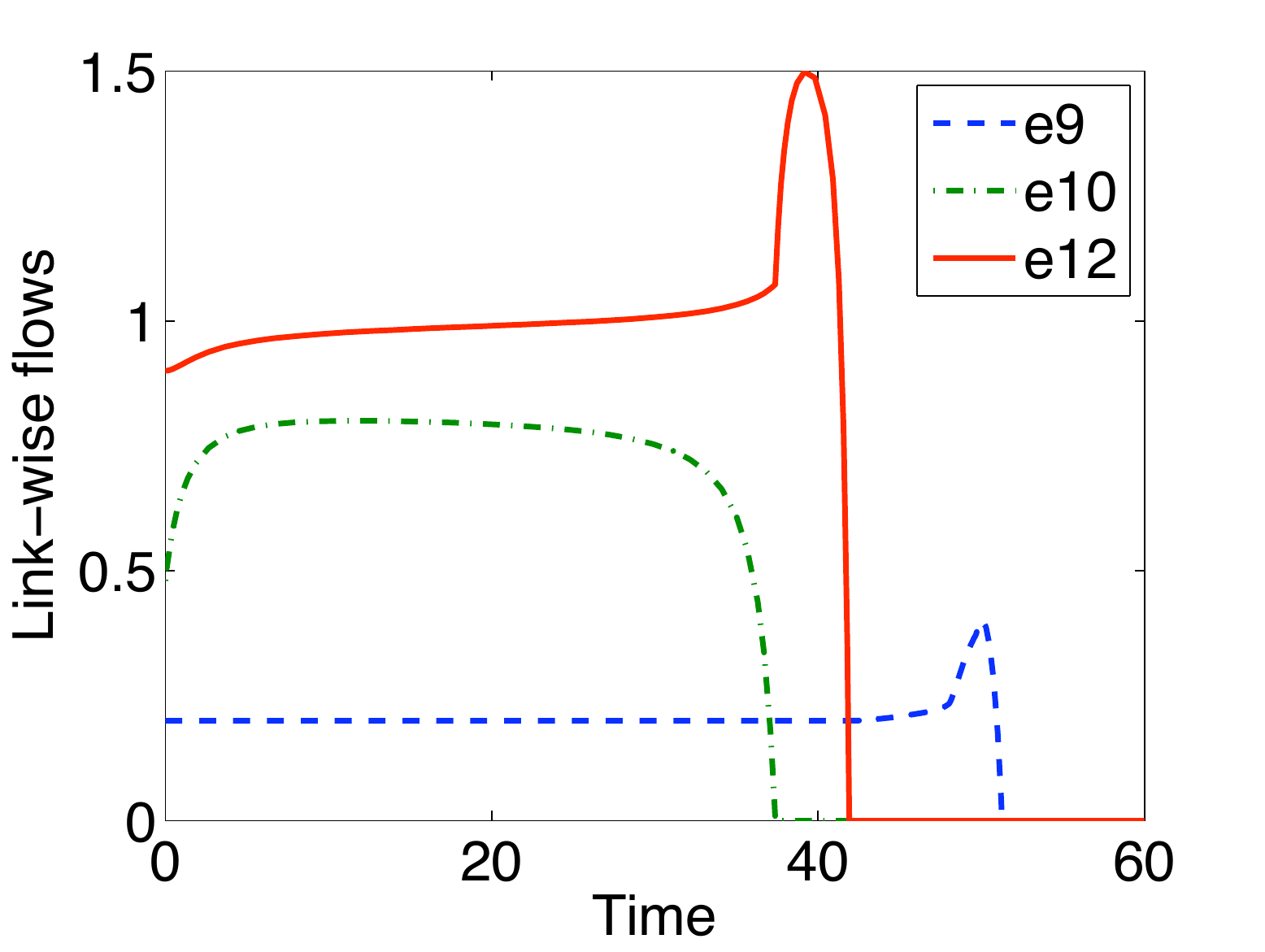}}
\end{center}
\caption{\label{fig:cascade} Plot of link-wise flows for some of the links of the network that ultimately shut down. The timings of shut downs of the links demonstrate the cascaded effect starting from link $e_{10}$ and traveling up to the origin node.}
\end{figure}

\section{Conclusion}
\label{sec:conclusion}
In this paper, we studied strong resilience of dynamical flow networks, with respect to perturbations that reduce the flow functions of the links of the network. We showed that locally responsive distributed routing policies yield the maximum strong resilience under local information constraint. We also showed that the corresponding strong resilience is equal to the minimum node residual capacity of the network, and hence depends on the limit flow of the unperturbed network.
Our results show that, unlike the weak resilience which was considered in \cite{PartI}, the strong resilience of a dynamical flow network is sensitive to local information constraint. 
We proposed simple convex optimization problems to solve for equilibria that maximize traditional metrics of social optimality such as average delay subject to guarantees on strong resilience. We also discussed the use of tolls to induce a generic equilibrium flow for the unperturbed system in the context of transportation networks. Finally, we also discussed cascaded failures due to spill backs when we impose finite density constraints on the links and illustrated the utility of routing policies discussed in this paper in averting such failures.
The findings of this and the companion paper~\cite{PartI} stand to provide important guidelines for management of several large scale critical infrastructures both from planning as well as real-time operation point of view. 

In future, we plan to extend the research in several directions. We plan to rigorously study the robustness properties of the network with finite link-wise capacity for the densities, and formally establish the results on the resilience as suggested in Section~\ref{sec:sim}.
We plan to study the scaling of the resilience with respect to the amount of information, e.g., multi-hop as opposed to just single-hop, available to the routing policies. We also plan to perform robustness analysis in a probabilistic framework to complement the adversarial framework of this paper, possibly considering other general models for disturbances. In particular, it would be interesting to study robustness with respect to sequential disturbances than just one-shot disturbance considered in this paper. We plan to consider a setting with buffer capacities on the nodes and study the scaling of the resilience with such buffer capacities.
We also plan to consider more general graph topologies, e.g., graphs having cycles and multiple origin-destination pairs. 
 
\appendices
\section{Proof of Theorem \ref{lemma:upperbound}}\label{sec:proof1}
In this section, we shall prove Theorem \ref{lemma:upperbound} by showing that, given a flow network $\mc N$ satisfying Assumptions \ref{ass:acyclicity} and \ref{ass:flowfunction}, a constant inflow $\lambda_0\ge0$, a distributed routing policy $\mc G$, and a limit flow $f^*\in\cl(\mc F)$ for the associated dynamical flow network (\ref{dynsyst}), the strong resilience satisfies $$\gamma_1(f^*,\mc G)= R(\mc N,f^*)\,.$$

Let $f^{\circ}\in\mc B(f^*)$ be some initial flow attracted by $f^*$. In order to prove the result it is sufficient to exhibit a family of admissible perturbations, with magnitude $\delta$ arbitrarily close to $R(\mc N,f^*)$, under which the network is not fully transferring with respect to $f^{\circ}$. Let us fix some non-destination node $0\le v<n$ minimizing the right-hand side of (\ref{gammadef}), and put $\kappa:=\sum_{e\in\mc E^+_v}\flowmax_e$. For any $R(\mc N,f^*)<\delta<\kappa$, consider the admissible perturbation defined by 
 \be\label{pertflowdelta}\tilde\mu_e(\rho_e):=\frac{\kappa-\delta}{\kappa}\mu_e(\rho_e)\,,\quad\forall e\in\mc E^+_v\,,\qquad\tilde\mu_e(\rho_e):=\mu_e(\rho_e)\,,\quad\forall e\in\mc E\setminus\mc E^+_v\,. \ee 
Clearly, the magnitude of such perturbation equals $\delta$.

Let us consider the origin-destination cut-set $\mc U:=\{0,1,\ldots,v\}$, and put $$\mc E^+_{\mc U}:=\{(u,w)\in\mc E:\,0\le u\le v, v<w\le n\}\,.$$ Observe that, thanks to Assumption \ref{ass:acyclicity} on the acyclicity of the network topology, since all the edges outgoing from some node $u\le v$ are unaffected by the perturbation,
the associated perturbed dynamical flow network (\ref{pertdynsyst}) with initial flow $\tilde f(0)=f^{\circ}\in\mc B(f^*)$ satisfies $$\lim_{t\to+\infty}\tilde f_e(t)=\lim_{t\to+\infty}f_e(t)= \floweqe\,,\qquad\forall e\in\mc E^+_u\,,\quad\forall 0\le u<v\,.$$
In particular, this implies that $\tilde{\mu}_e(\tilde \rho_e(t))= \floweqe$ for all $t\ge0$, and for every link $e\in\mc E^+_{\mc U}\setminus\mc E^+_v$. On the other hand, one has that $$\tilde f_e(t)<\tilde f_e^{\max}=\frac{\kappa-\delta}{\kappa}\flowmaxe\,,\qquad\forall e\in\mc E^+_v\,,\ \forall t\ge0\,.$$ Therefore, one has that  
\be\label{eq:cutflowbound}
\ba{rcl}\ds\limsup_{t\to+\infty}\sum\nolimits_{e\in\mc E^+_{\mc U}}\tilde f_e(t)&\le&
\ds\sum\nolimits_{e \in \mc E_v^+} \newflowmaxe + \sum\nolimits_{e \in \mc E_{\mc U}^+ \setminus \mc E_v^+} \floweqe \\[7pt]&= &\ds\frac{\kappa-\delta}{\kappa} \sum\nolimits_{e \in \mc E_v^+} \flowmaxe + \sum\nolimits_{e \in \mc E_{\mc U}^+ \setminus \mc E_v^+} \floweqe\\[7pt]&= &\ds\sum\nolimits_{e \in \mc E_v^+} \flowmaxe - \delta - \sum\nolimits_{e \in \mc E_v^+} f_e^* + \sum\nolimits_{e \in \mc E_{\mc U}^+} f_e^* \\[7pt]
&= & R(\mc N,f^*) - \delta + \lambda_0\,.\ea\ee
Observe that, for every $v<w<n$, and $t\ge0$, 
\be\label{desumrhoedet}\ba{rcl}\ds\frac{\de}{\de t}\l(\sum\nolimits_{e\in\mc E^+_w}\tilde \rho_e(t)\r)&=&
\ds\sum\nolimits_{e\in\mc E^+_w}\l(\sum\nolimits_{e\in\mc E^-_w}\tilde f_e(t)\r)G^v_e(\tilde\rho^w(t))-\sum\nolimits_{e\in\mc E^+_w}\tilde f_e(t)\\[7pt]&=&\ds\sum\nolimits_{e\in\mc E^-_w}\tilde f_e(t)-\sum\nolimits_{e\in\mc E^+_w}\tilde f_e(t)\,.\ea\ee
Define the edge sets $$\mc A:=\bigcup\nolimits_{w=v+1}^{n-1}\mc E^+_w\,,\qquad 
\mc D:=\bigcup\nolimits_{w=v+1}^{n}\mc E^-_w\,,$$ 
and put $\zeta(t):=\sum_{e\in\mc A}\rho_e(t)$. 
Using (\ref{desumrhoedet}), the identity $\mc A\cup\mc E^+_{\mc U}=\mc D$, and (\ref{eq:cutflowbound}), one gets that there exists some $\tau'\ge0$ such that 
\be\label{dezetadet}\ba{rcl}\ds\frac{\de}{\de t}\zeta(t)&=&
\ds\sum\nolimits_{v<w\le n}\sum\nolimits_{e\in\mc E^-_w}\tilde f_e(t)-\ds\sum\nolimits_{v<w\le n}\sum\nolimits_{e\in\mc E^+_w}\tilde f_e(t)\\[7pt]&=&\ds\sum\nolimits_{e\in\mc D}\tilde f_e(t)-\sum\nolimits_{e\in\mc E^-_n}\tilde f_e(t)-\sum\nolimits_{e\in\mc A}\tilde f_e(t)\\[7pt]&=&\ds\sum\nolimits_{e\in\mc E^+_{\mc U}}\tilde f_e(t)-\sum\nolimits_{e\in\mc E^-_n}\tilde f_e(t)\\[7pt]&\le&\ds R(\mc N,f^*) - \delta + \lambda_0-\tilde\lambda_n(t)+\eps\,,
\ea\ee
for all $t\ge\tau'$. Now assume, by contradiction, that 
$$\liminf_{t\to+\infty}\tilde\lambda_n(t)>  R(\mc N,f^*) - \delta + \lambda_0\,.$$
Then, there would exist some $\eps>0$ and $\tau''\ge0$ such that $$\tilde\lambda_n(t)\ge R(\mc N,f^*) - \delta + \lambda_0+2\eps\,,\qquad t\ge\tau''\,.$$ It would then follow from (\ref{dezetadet}) and Gronwall's inequality that 
$$\zeta(t)\le\zeta(\tau)-(t-\tau)\eps\,,\qquad \forall t\ge\tau\,,$$ where $\tau:=\max\{\tau',\tau''\}$. Then, $\zeta(t)$ would converge to $-\infty$ as $t$ grows large, contradicting the fact that $\zeta(t)\ge0$ for all $t\ge0$. Hence, necessarily
$$\liminf_{t\to+\infty}\tilde\lambda_n(t)\le  R(\mc N,f^*) - \delta + \lambda_0<\lambda_0\,,$$
so that the perturbed dynamical flow network is not fully transferring.
Then, from the arbitrariness of the perturbation's magnitude $\delta\in(R(\mc N,f^*),\kappa)$, it follows that the network's strong resilience is upper bounded by $R(\mc N,f^*)$.

\section{Proof of Theorem \ref{maintheo}}\label{sec:proof2}
In this section, we shall prove Theorem \ref{maintheo}, by showing that, given a flow network $\mc N$ satisfying Assumptions \ref{ass:acyclicity} and \ref{ass:flowfunction}, a constant inflow $\lambda_0\ge0$, and a locally responsive distributed routing policy $\mc G$, then the strong resilience of the unique limit flow $f^*\in\cl(\mc F)$ of the associated dynamical flow network (\ref{dynsyst}) satisfies $$\gamma_1(f^*,\mc G)= R(\mc N,f^*)\,.$$ 
Thanks to Theorem \ref{lemma:upperbound}, it is sufficient to show that 
\be\label{gamma1LB}\gamma_1(f^*,\mc G)\ge R(\mc N,f^*)\,.\ee 

First, let us consider the case when $f^*\in\cl(\mc F)\setminus\mc F^*(\lambda_0)$, i.e., when the limit flow of the unperturbed dynamical flow network (\ref{dynsyst}) is not an equilibrium. As argued in Remark \ref{remark:limitflowequilibrium}, in this case some of the capacity constraints are satisfied with equality, i.e., there exist $0\le v<n$ and $e\in\mc E^+_v$ such that $f^*_{e}=f^{\max}_e$. Then, Theorem \ref{thm:uniquelimitflow} implies that $f^*_{e}=f^{\max}_e$ for all $e\in\mc E^+_v$, so that 
$$R(\mc N,f^*)\le\sum_{e\in\mc E^+_v}\l(f^{\max}_e-f_e^*\r)=0\,,$$
and (\ref{gamma1LB}) is trivially satisfied, since $\gamma_1(f^*,\mc G)\ge0$ by definition. Therefore, for the rest of this section, we shall restrict ourselves on the case when $f^*\in\mc F^*(\lambda_0)$, i.e., when $f^*$ is a globally attractive equilibrium flow of the unperturbed dynamical flow network (\ref{dynsyst}). 

Observe that, for any admissible perturbation, regardless of its magnitude, the perturbed dynamical flow network (\ref{pertdynsyst}) satisfies all the assumptions of Theorem \ref{thm:uniquelimitflow}, which can therefore be applied to show the existence of a globally attractive perturbed limit flow $\tilde f^*\in\mc\cl(\mc F)$. This in particular implies that $\tilde\lambda_n(t)=\sum_{e\in\mc E^-_n}\tilde f_e(t)$ converges to $\tilde\lambda_n^*=\sum_{e\in\mc E^-_n} \tilde f^*_e$ as $t$ grows large. However, this is not sufficient in order to prove strong resilience of the perturbed dynamical flow network (\ref{pertdynsyst}), as it might be the case that $\tilde\lambda^*_n<\lambda_0$.

In fact, it turns out that, if the magnitude of the admissible perturbation is smaller than $R(\mc N,f^*)$, the perturbed limit flow $\tilde f^*$ is an equilibrium flow for the perturbed dynamical flow network, so that $\tilde\lambda^*_n=\lambda_0$ and (\ref{pertdynsyst}) is fully transferring. In order to show this, we need to study the \emph{perturbed local  system} 
\be\label{pertlocalsys}
\frac{\de}{\de t}\tilde\rho_e(t)=\tilde\lambda(t)G^v_e(\tilde\rho^v(t))-\tilde f_e(t)\,,\qquad \tilde f_e(t)=\tilde\mu_e(\tilde\rho_e(t))\,,\qquad\forall e\in\mc E^+_v
\,,\ee
for every non-destination node $0\le v<n$, and nonnegative-real-valued, Lipschitz continuous local input $\tilde\lambda(t)$. Indeed, \cite[Lemma 4]{PartI} can be applied to the perturbed local system (\ref{pertlocalsys}) establishing convergence of the perturbed local flows $\tilde f^v(t)$ to a local equilibrium flow $\tilde f^*(\lambda)\in\mc F_v$, provided that the input flow $\tilde\lambda(t)$ converges, as $t$ grows large, to a value $\lambda$ which is strictly smaller than the sum of the perturbed flow capacities of the outgoing links. However, such local result is not sufficient to prove strong resilience of the entire perturbed dynamical flow network. The key property in order to prove such a global result is stated in Lemma \ref{lemmadiffusivity}, which describes how the flow redistributes itself upon the network perturbation. In particular, such result ensures that the increase in flow on all the links downstream from a node whose outgoing links are affected by a given perturbation, is less than the magnitude of the disturbance itself. We shall refer to this property as the \emph{diffusivity} of the local perturbed system. 
\begin{lemma}[Diffusivity of the local perturbed system]
\label{lemmadiffusivity}
Let $\mc N$ be a flow network satisfying Assumptions \ref{ass:acyclicity} and \ref{ass:flowfunction}, $\mc G$ be a  locally responsive distributed routing policy, $\lambda_0\ge0$ a constant inflow. Assume that $f^*\in\mc F^*(\lambda_0)$ is an equilibrium flow for the dynamical flow network (\ref{dynsyst}). Let $\tilde{\mc N}$ be an admissible perturbation of $\mc N$, $0\le v<n$ be a nondestination node, $\lambda^*_v:=\sum_{e\in\mc E^+_v}f^*_e$, and $\lambda\in[0,\sum_{e\in\mc E^+_v}\tilde f^{\max}_e)$. Then, for every $\mc J\subseteq\mc E_v^+$,  the local equilibrium flow $\tilde f^{*}(\lambda)$ of the perturbed local system (\ref{pertdynsyst}) with constant local input $\tilde\lambda(t)\equiv\lambda$ satisfies
\be\sum\nolimits_{e \in\mc  J} \left(\tilde f^*_e(\lambda)-\floweq_e \right) \le 
\left[\lambda-\lambda_v^* \right]_+ + \sum\nolimits_{e \in \edgeset_v^+} \delta_e\,, \label{diffusivity}\ee
where $\delta_e:=||\mu_e(\,\cdot\,)-\tilde\mu_e(\,\cdot\,)||_{\infty}$.
\end{lemma}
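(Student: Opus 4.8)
# Proof Proposal for Lemma~\ref{lemmadiffusivity}

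\textbf{Setup and reduction.} The plan is to analyze the local equilibrium flow $\tilde f^*(\lambda)$ of the perturbed single-node system (\ref{pertlocalsys}) directly, exploiting the structure of locally responsive routing policies. Write $m := \sum_{e\in\mc E^+_v}\delta_e$ for the total perturbation magnitude at node $v$, and recall that by Assumption~\ref{ass:flowfunction} each $\mu_e$ (and each $\tilde\mu_e$) is a continuous strictly increasing bijection from $\R_+$ onto $[0,\flowmax_e)$ (resp. $[0,\newflowmaxe)$), so the equilibrium densities are well-defined wherever the flows stay below capacity. Since the right-hand side of (\ref{diffusivity}) is monotone in $\mc J$ — enlarging $\mc J$ only adds nonnegative terms once we bound each summand — it suffices to prove the inequality for $\mc J = \{e \in \mc E^+_v : \tilde f^*_e(\lambda) > \floweq_e\}$, the set of links whose flow \emph{increases} under perturbation; for any other $\mc J$ the left side is no larger. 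So the real content is: the total flow increase, summed over links that gain flow, is at most $[\lambda - \lambda^*_v]_+ + m$.

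\textbf{Main argument via the defining fixed-point relations.} At the unperturbed equilibrium, $\floweq_e = \lambda^*_v G^v_e(\rhoeq{}^v)$ with $\rhoeq_e = \mu_e^{-1}(\floweq_e)$; at the perturbed local equilibrium, $\tilde f^*_e(\lambda) = \lambda\, G^v_e(\tilde\rho^v)$ with $\tilde\rho_e = \tilde\mu_e^{-1}(\tilde f^*_e(\lambda))$ (on links not saturated; saturated links I handle separately, see below). The key comparison is between the density vectors $\rhoeq{}^v$ and $\tilde\rho^v$. Because $\tilde\mu_e \le \mu_e$ pointwise with $\|\mu_e - \tilde\mu_e\|_\infty = \delta_e$, a link carrying the \emph{same} flow under $\tilde\mu_e$ sits at a higher density than under $\mu_e$; quantitatively, if $\tilde f^*_e(\lambda) \le \floweq_e$ then $\tilde\rho_e \ge \rhoeq_e$ need not hold in general, but the useful direction is: on a link where the flow \emph{drops}, I want to show the density does not drop much, and on a link where flow \emph{rises}, I want to bound the rise. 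The mechanism of Property~(a) of Definition~\ref{def:myopicpolicy} is that $G^v_e$ is nondecreasing in $\rho_j$ for $j \ne e$: so if the densities on all links $j \ne e$ go up, the share routed to $e$ goes up. I would set up a monotonicity/partition argument: split $\mc E^+_v$ into the "gaining" set $\mc J$ and "losing" set $\mc E^+_v \setminus \mc J$. Flow conservation at the node gives $\sum_{e\in\mc E^+_v}\tilde f^*_e(\lambda) = \lambda$ (when nothing saturates) versus $\sum_{e\in\mc E^+_v}\floweq_e = \lambda^*_v$, hence $\sum_{e\in\mc J}(\tilde f^*_e - \floweq_e) = (\lambda - \lambda^*_v) + \sum_{e\in\mc E^+_v\setminus\mc J}(\floweq_e - \tilde f^*_e)$. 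So I must bound the total flow \emph{lost} on the losing links by $[\lambda^*_v - \lambda]_+ + m$ (the two cases $\lambda \gtrless \lambda^*_v$ recombining to give the stated bound). For a losing link $e$, $\floweq_e - \tilde f^*_e(\lambda) = \lambda^*_v G^v_e(\rhoeq{}^v) - \lambda G^v_e(\tilde\rho^v)$; I would argue that on losing links the density is higher ($\tilde\rho_e \ge \rhoeq_e$ — since flow went down while the flow function also went down, density cannot have decreased by the strict monotonicity of $\tilde\mu_e$ together with $\tilde\mu_e \le \mu_e$), and then use Property~(a): on the losing links, densities went up, which by (a) would tend to \emph{increase} their routed share, contradicting a large drop unless $\lambda$ itself is smaller or the flow-function drop $\delta_e$ absorbs it. This is where the quantitative $\delta_e$ bookkeeping enters: $\tilde f^*_e = \tilde\mu_e(\tilde\rho_e) \ge \mu_e(\tilde\rho_e) - \delta_e \ge \mu_e(\rhoeq_e) - \delta_e = \floweq_e - \delta_e$ when $\tilde\rho_e \ge \rhoeq_e$. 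Summing over losing links gives $\sum_{e\in\mc E^+_v\setminus\mc J}(\floweq_e - \tilde f^*_e) \le m$ in the subcase $\lambda \ge \lambda^*_v$, and the extra $[\lambda^*_v-\lambda]_+$ slack covers $\lambda < \lambda^*_v$.

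\textbf{Handling saturation and the boundary case.} If some links in $\mc E^+_v$ saturate at the perturbed equilibrium ($\tilde f^*_e(\lambda) = \newflowmaxe$), Theorem~\ref{thm:uniquelimitflow} forces \emph{all} of $\mc E^+_v$ to saturate, so $\tilde f^*_e(\lambda) = \newflowmaxe \le \flowmax_e$ for every $e$, and then $\sum_{e\in\mc J}(\tilde f^*_e - \floweq_e) \le \sum_{e\in\mc E^+_v}(\newflowmaxe - \floweq_e) = \sum_{e\in\mc E^+_v}(\flowmax_e - \floweq_e) - \sum_{e}(\flowmax_e - \newflowmaxe) \le m + (\sum_e \flowmax_e - \lambda^*_v - m)$... here I'd need to be careful; more directly, since $\lambda < \sum_e \newflowmaxe$ is assumed, the saturated case actually cannot be a genuine equilibrium of the local system unless $\lambda$ equals the saturated throughput, so I expect to rule it out or bound it crudely by $m + [\lambda - \lambda^*_v]_+$ using $\newflowmaxe \ge \flowmax_e - \delta_e$.

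\textbf{Main obstacle.} The delicate step is establishing the density comparison $\tilde\rho_e \ge \rhoeq_e$ on losing links and then \emph{combining} it with Property~(a) to close the loop — because (a) only says each $G^v_e$ is monotone in the \emph{other} coordinates, so I cannot control $G^v_e(\tilde\rho^v)$ vs $G^v_e(\rhoeq{}^v)$ componentwise without knowing how $\rho_e$ itself moved. The right formalization is likely an argument by contradiction or a continuity/homotopy argument along the family $\tilde\mu_e^{(s)} = (1-s)\mu_e + s\tilde\mu_e$, tracking the local equilibrium as $s: 0 \to 1$ and showing $\frac{d}{ds}\sum_{e\in\mc J}\tilde f^{*,(s)}_e$ stays controlled — invoking \cite[Lemma~4]{PartI} for well-posedness of the local equilibrium along the path. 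I expect this continuity-in-perturbation argument, together with the sign structure forced by (a), to be the technical heart of the proof; the flow-conservation bookkeeping and the saturation case are routine by comparison.
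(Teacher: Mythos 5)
Your bookkeeping reduction (bound the total loss on the ``losing'' links by $[\lambda^*_v-\lambda]_+ + \sum_e\delta_e$ and use conservation at the node) is consistent with the target inequality, but the step that carries all the weight is not established, and the justification you sketch for it is wrong. You claim that on a losing link ($\tilde f^*_e(\lambda)\le \floweqe$) one has $\tilde\rho_e\ge\rhoeq_e$ ``since flow went down while the flow function also went down.'' This inference is invalid: take $\delta_e=0$ on that link and $\lambda<\lambda^*_v$ with no perturbation anywhere; then every link loses flow and every density strictly \emph{decreases}. More to the point, $\tilde\rho_e\ge\rhoeq_e$ is equivalent (via $\tilde\mu_e(\rhoeq_e)\ge\mu_e(\rhoeq_e)-\delta_e$) to the inequality $\tilde f^*_e\ge\floweqe-\delta_e$ that you are trying to prove, so assuming the density comparison is circular. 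You correctly identify in your last paragraph that Property~(a) cannot be applied directly to compare the two equilibria (since you do not control how $\rho_e$ itself moved), but the homotopy/continuation argument you gesture at is not carried out, and without it the proof has a hole exactly at its technical heart.

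The paper closes this gap with two ideas you are missing. First, it replaces $\lambda$ by $\hat\lambda:=\max\{\lambda,\lambda^*_v\}$, proves the bound for the local equilibrium $\tilde f^*(\hat\lambda)$, and then transfers it to $\tilde f^*(\lambda)$ by monotonicity of the local equilibrium in the input (Lemma~3 of \cite{PartI}); this is what makes the $[\lambda-\lambda^*_v]_+$ slack appear and removes the problematic case $\lambda<\lambda^*_v$ where your pointwise density comparison genuinely fails. Second, instead of comparing fixed points, it argues \emph{dynamically}: initialize the perturbed local system (\ref{pertlocalsys}) with input $\hat\lambda$ at the unperturbed equilibrium density $\rhoeq$, and show that the orthant $\Omega=\{\rho^v:\rho_j\ge\rhoeq_j\ \forall j\in\edgeset_v^+\}$ is positively invariant. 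On a boundary face where $\hat\rho_i=\rhoeq_i$ and $\hat\rho_j\ge\rhoeq_j$ for $j\ne i$, Property~(a) (via Lemma~4 of \cite{PartI}) gives $G^v_i(\hat\rho^v)\ge G^v_i(\rhoeq)$ componentwise --- precisely the configuration where (a) is usable --- and together with $\hat\lambda\ge\lambda^*_v$ and $\tilde\mu_i(\hat\rho_i)\le\mu_i(\rhoeq_i)$ this makes the vector field point inward. Global attractivity of the unique local equilibrium then yields $\tilde f^*_e(\hat\lambda)\ge\tilde\mu_e(\rhoeq_e)\ge\floweqe-\delta_e$ for \emph{every} link, after which your conservation bookkeeping does finish the proof. (Your separate worry about saturation inside the local system is not needed: under Assumption~\ref{ass:flowfunction} the suprema $\newflowmaxe$ are not attained, and $\lambda<\sum_e\newflowmaxe$ together with the lemmas of \cite{PartI} guarantees a well-defined local equilibrium.) As written, your proposal does not constitute a proof.
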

\begin{proof}
Define $\lambda_v^{*}:=\sum\nolimits_{e \in \mc E_v^+}\floweqe$, and $\hat\lambda:=\max\{\lambda,\lambda_v^{*}\}$. Let $\hat\rho^v(t)$ be the solution of the perturbed local system (\ref{pertlocalsys}) with constant input $\tilde\lambda(t)\equiv\hat\lambda$, and initial condition $\hat\rho_e(0)=\rho^*_e:=\mu_e^{-1}(f^*_e)$, for all $e\in\mc E^+_v$, and let $\hat f_e(e):=\tilde\mu_e(\hat\rho_e(t))$. We shall first prove that 
\be\label{morethanstart}
\hat f_e(t)\ge \tilde \mu_e(\rho_e^*)\,, \quad \forall \, t\ge0\, \quad \forall \, e \in \mc E_v^+. 
\ee
For this, consider a point $\hat \rho^v\in\mc R_v$, such that $\hat \rho^v \ne\rhoeq$, and there exists some $i\in\mc E_v^+$ such that $\hat \rho_i=\rhoeq_i$ and $\hat \rho_e\ge\rhoeq_e$ for all $e\ne i\in\mc E_v^+$. For such a $\hat \rho^v$ and $i$, \cite[Lemma 4]{PartI} implies that $G^v_i(\hat \rho^v) \geq G^v_i(\rhoeq)$. This, combined with the fact that $\hat \lambda\geq \lambda_v^{*}$ and $$\tilde \mu_i(\hat \rho_i) \leq \mu_i (\hat \rho_i) = \mu_i (\rhoeq_i)\,,$$ yields
\begin{equation}
\label{eq:hat-der}
\hat\lambda_v G_i^v(\hat \rho^v)-\tilde\mu_i(\hat \rho_i) \geq \lambda_v^{*} G_i^v(\rhoeq)-\mu_i(\rhoeq_i)=0\,.
\end{equation}
Considering the region $\Omega:=\{\hat \rho^v \in\mc R_v:\,\hat \rho_j\ge\rhoeq_j\,,\ \forall j\in\mc E_v^+\}$, and denoting by $\omega\in\R^{\mc E^+_v}$ the unit outward-pointing normal vector to the boundary of $\Omega$ at $\hat \rho^v$, \eqref{eq:hat-der} shows that 
$$\frac{\de}{\de t}\hat\rho^v \cdot \omega= \l(\hat\lambda_v G^v(\hat \rho^v)-\tilde\mu_v(\hat \rho^v)\r)\cdot\omega \leq 0\,,\quad \forall \hat \rho^v \in\partial\Omega\,,\ t\ge0\,.$$
Therefore, $\Omega$ is invariant under (\ref{pertlocalsys}). Since $\hat \rho^v(0) = \rhoeq \in \Omega$, this proves \eqref{morethanstart}.

Now, \cite[Lemma 4]{PartI} implies that there exists a unique local equilibrium flow $\hat f^*:=\tilde f^*(\hat\lambda)$. Then, for any $\mc J \subseteq \mc E_v^+$, (\ref{morethanstart}) implies that 
\be\label{eq:diff}\ba{rcl}
\ds\sum\nolimits_{j}\hat f_j^*&=&
\ds\hat\lambda_v^*-\sum\nolimits_{k}\hat f_k^* \\[7pt]
&\le&\ds\hat\lambda_v^*-\sum\nolimits_{k}\tilde\mu_k(\rhoeq_k) \\[7pt]
&=&\ds\hat \lambda_v^*-\lambda^{*}_v +\sum\nolimits_{j}\floweq_j+\sum\nolimits_{k}\mu_k(\rhoeq_k)-\sum\nolimits_{k }\tilde\mu_k(\rhoeq_k)\\[7pt]
&\le&\ds[\hat \lambda_v^*-\lambda^{*}_v]_++\sum\nolimits_{j}\floweq_j + \sum\nolimits_{k } \delta_k\\[7pt]
&\le&\ds[\hat \lambda_v^*-\lambda^{*}_v]_++\sum\nolimits_{j}\floweq_j + \sum\nolimits_{e} \delta_e\, ,
\ea\ee
where the summation indices $j$, $k$, and $e$ run over $\mc J$, $\mc E^+_v \setminus \mc J$, and $\mc E^+_v$, respectively. 
Moreover, since $\lambda\le\hat\lambda$ from \cite[Lemma 3]{PartI}, one gets that 
$\tilde f_e^*(\lambda)\leq\tilde f_e^*(\hat\lambda)=\hat f_e^*$ for all $e\in\mc E_v^+$. In particular, this implies that $$\sum_{j \in \mc J}\tilde f_j^*(\lambda)\leq\sum_{j \in \mc J}\hat f_j^*\,,\qquad \forall \mc J \subset \mc E_v^+\,.$$ This, combined with \eqref{eq:diff}, proves \eqref{diffusivity}.
\end{proof}\medskip

The following lemma exploits the diffusivity property from Lemma~\ref{lemmadiffusivity} along with an induction argument on the topological ordering of the node set to prove that $R(\mc N,f^*)$ is indeed a lower bound on the strong resilience of the network under the locally responsive distributed routing policies.

\begin{lemma}[Globally attractive equilibrium for perturbed flow network]
\label{lemmainduction}
Consider a flow network $\mc N$ satisfying Assumptions \ref{ass:acyclicity} and \ref{ass:flowfunction}, a locally responsive distributed routing policy $\mc G$, and a constant inflow $\lambda_0\ge0$. Assume that $f^*\in\mc F^*(\lambda_0)$ is an equilibrium flow for the associated dynamical flow network. Let $\tilde{\mc N}$ be an admissible perturbation of $\mc N$, of magnitude $\delta< R(\mc N,f^*)$. Then, the perturbed dynamical flow network (\ref{pertdynsyst}) has a globally attractive equilibrium flow and hence it is fully transferring.
\end{lemma}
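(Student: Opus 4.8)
The plan is to promote the globally attractive perturbed limit flow $\tilde f^*\in\cl(\mc F)$ --- which exists and is unique by applying Theorem~\ref{thm:uniquelimitflow} to the admissible perturbation $\tilde{\mc N}$ --- to a genuine equilibrium flow. By Remark~\ref{remark:limitflowequilibrium} it suffices to show $\tilde f^*\in\mc F$, i.e.\ that no capacity constraint is tight; and by the node-local saturation dichotomy in the last assertion of Theorem~\ref{thm:uniquelimitflow} this is in turn equivalent to showing, for every non-destination node $0\le v<n$, that the perturbed limit inflow $\tilde\lambda_v^*:=\sum_{e\in\mc E^-_v}\tilde f^*_e$ is strictly below the perturbed outgoing capacity $\sum_{e\in\mc E^+_v}\tilde f^{\max}_e$. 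Indeed, if that holds then \cite[Lemma~4]{PartI}, applied to the perturbed local system~(\ref{pertlocalsys}) at $v$ with the convergent local input $\sum_{e\in\mc E^-_v}\tilde f_e(t)\to\tilde\lambda_v^*$, identifies the restriction of $\tilde f^*$ to $\mc E^+_v$ with an unsaturated local equilibrium, so mass is conserved at $v$; collecting this over all $v$ gives $\tilde f^*\in\mc F^*(\lambda_0)$ and hence full transfer.

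First I would make the target inequality combinatorial. Since $\tilde f^{\max}_e\ge\flowmax_e-\delta_e$ and $\sum_{e\in\mc E^+_v}f^*_e=\lambda_v^*:=\sum_{e\in\mc E^-_v}f^*_e$ by mass conservation at the unperturbed equilibrium, definition~(\ref{gammadef}) yields
$$\sum\nolimits_{e\in\mc E^+_v}\tilde f^{\max}_e\ge\sum\nolimits_{e\in\mc E^+_v}(\flowmax_e-f^*_e)+\lambda_v^*-\sum\nolimits_{e\in\mc E^+_v}\delta_e\ge R(\mc N,f^*)+\lambda_v^*-\sum\nolimits_{e\in\mc E^+_v}\delta_e\,,$$
so it is enough to prove $\tilde\lambda_v^*-\lambda_v^*+\sum_{e\in\mc E^+_v}\delta_e\le\delta$ for every $v$, which is $<R(\mc N,f^*)$ by hypothesis.

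The heart of the argument is an induction on the topological order $v=0,1,\dots,n-1$ of Assumption~\ref{ass:acyclicity}, carrying the strengthened hypothesis that mass is conserved at the nodes $1,\dots,m$ and that the total surplus flow across the origin--destination cut $\mc U_m:=\{0,\dots,m\}$ is controlled by the disturbance already present upstream,
$$\operatorname{exc}(\mc U_m):=\sum\nolimits_{e\in\mc E^+_{\mc U_m}}\big[\tilde f^*_e-f^*_e\big]_+\le\sum\nolimits_{e:\,\mathrm{tail}(e)\le m}\delta_e\le\delta\,.$$
Because mass is conserved on $\mc U_m$, the surplus across the cut equals the deficit, and in particular $\tilde\lambda_{m+1}^*-\lambda_{m+1}^*\le\operatorname{exc}(\mc U_m)$; combined with the reduction of the preceding paragraph and $\delta<R(\mc N,f^*)$ this gives $\tilde\lambda_{m+1}^*<\sum_{e\in\mc E^+_{m+1}}\tilde f^{\max}_e$, so \cite[Lemma~4]{PartI} forces the local flows at $m+1$ to converge to an unsaturated local equilibrium, establishing mass conservation at $m+1$. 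To advance the invariant, note that passing from $\mc U_m$ to $\mc U_{m+1}=\mc U_m\cup\{m+1\}$ deletes the links $\mc E^-_{m+1}$ from the cut and inserts $\mc E^+_{m+1}$; the diffusivity estimate~(\ref{diffusivity}) of Lemma~\ref{lemmadiffusivity}, applied at $m+1$ with $\lambda=\tilde\lambda_{m+1}^*$ and $\mc J=\{e\in\mc E^+_{m+1}:\tilde f^*_e>f^*_e\}$, bounds the surplus that $m+1$ pushes onto $\mc E^+_{m+1}$ by the (net, hence positive-part) surplus it receives on $\mc E^-_{m+1}$ plus $\sum_{e\in\mc E^+_{m+1}}\delta_e$. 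A short book-keeping then telescopes to $\operatorname{exc}(\mc U_{m+1})\le\operatorname{exc}(\mc U_m)+\sum_{e\in\mc E^+_{m+1}}\delta_e$, closing the induction; the base case $m=0$ is the very same diffusivity estimate at the origin, where $\tilde\lambda_0^*=\lambda_0=\lambda_0^*$.

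The step I expect to be the real obstacle is hitting on this \emph{cut-wise} surplus as the correct inductive quantity. A naive per-node bound on $\tilde\lambda_v^*-\lambda_v^*$ accumulates the surpluses of all predecessors of $v$ and blows up; the point is that conservation of mass ties surplus to deficit across every cut through the already-processed region, while Lemma~\ref{lemmadiffusivity} ensures that sweeping the cut past one more node adds at most that node's \emph{own} outgoing disturbance $\sum_{e\in\mc E^+_{m+1}}\delta_e$ to the surplus, never amplifying what is already there. A secondary subtlety is the order of operations inside the induction step: the inflow bound at $m+1$ must be read off from the inductive hypothesis \emph{before} \cite[Lemma~4]{PartI} can be invoked to get mass conservation at $m+1$, which in turn is needed before Lemma~\ref{lemmadiffusivity} can legitimately be used to update the cut invariant.
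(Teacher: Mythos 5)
Your proposal is correct and follows essentially the same route as the paper's proof: existence of the globally attractive perturbed limit flow via Theorem~\ref{thm:uniquelimitflow}, an induction along the topological order controlling the cut-wise flow surplus by the upstream disturbance (your invariant $\operatorname{exc}(\mc U_m)\le\sum_{e\in\mc D_m}\delta_e$ is equivalent, via positive parts, to the paper's ``for all $\mc J\subseteq\mc B_m$'' statement \eqref{eq:ind-stat}), with Lemma~\ref{lemmadiffusivity} driving the inductive step, and finally the conclusion that $\tilde f^*$ is unsaturated, hence an equilibrium by Remark~\ref{remark:limitflowequilibrium}, hence fully transferring. Your explicit ordering --- establish the inflow bound at node $m+1$ from the invariant before invoking the local convergence results of \cite{PartI} and the diffusivity estimate --- is if anything slightly more careful than the paper's presentation, but it is the same argument.
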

\begin{proof}
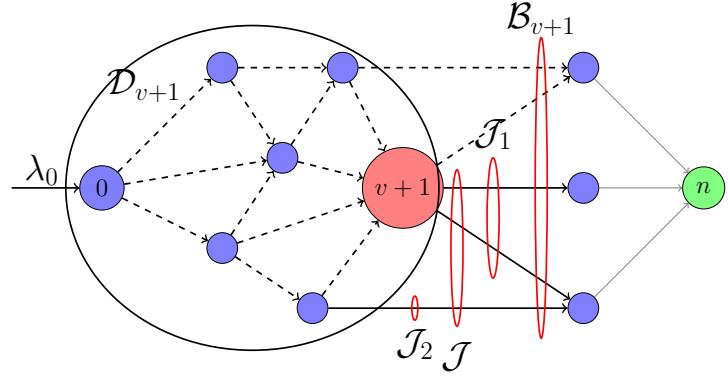
\begin{figure}
\begin{center}
\scalebox{0.8}
{
\begin{tikzpicture}

\path (0,0) node(a) [circle,draw,fill=blue!50!white] {$0$}
		(2,2) node (b) [circle,draw,fill=blue!50!white] {\text{  }}
	(4,2) node (c) [circle,draw,fill=blue!50!white] {\text{  }}
	(3,0.5) node (d) [circle,draw,fill=blue!50!white] {\text{  }} 
	(2,-1) node (e) [circle,draw,fill=blue!50!white] {\text{  }} 	
	(3.5,-2) node (f) [circle,draw,fill=blue!50!white] {\text{  }} 
		(5, 0) node (g) [circle,draw,fill=red!50!white] {$v+1$}
	(8,2) node (h) [circle,draw,fill=blue!50!white] {\text{  }} 
	(8,0) node (i) [circle,draw,fill=blue!50!white] {\text{  }} 
	(8,-2) node (j) [circle,draw,fill=blue!50!white] {\text{  }} 	
	(10, 0) node (k) [circle,draw,fill=green!50!white] {$n$};
	
\draw[thick,->] (-1.5,0) -- (a);	
\draw[dashed,thick,->] (a) -- (b);
\draw[dashed,thick,->] (b) -- (c);
\draw[dashed,thick,->] (b) -- (d);
\draw[dashed,thick,->] (a) -- (e);
\draw[dashed,thick,->] (a) -- (d);	
\draw[dashed,thick,->] (d) -- (c);
\draw[dashed,thick,->] (e) -- (d);
\draw[dashed,thick,->] (e) -- (f);
\draw[dashed,thick,->] (f) -- (g);
\draw[dashed,thick,->] (e) -- (g);
\draw[dashed,thick,->] (d) -- (g);
\draw[dashed,thick,->] (c) -- (g);
\draw[dashed,thick,->] (c) -- (h);
\draw[dashed,thick,->] (g) -- (h);
\draw[thick,->] (g) -- (i);
\draw[thick,->] (g) -- (j);
\draw[thick,->] (f) -- (j);
\draw[gray,->] (h) -- (k);
\draw[gray,->] (i) -- (k);
\draw[gray,->] (j) -- (k);	
	
\draw[thick] (2.5,0) circle (3.1 cm and 2.7 cm);
\draw[thick,red] (7.3,0) circle (0.1 cm and 2.5 cm);
\draw[thick,red] (5.9,-1) ellipse (0.1 cm and 1.3 cm);
\draw[thick,red] (6.5,-0.5) ellipse (0.1 cm and 1 cm);
\draw[thick,red] (5.2,-2) ellipse (0.05 cm and 0.2 cm);

\draw (-1,0.3) node {\Large $\lambda_0$};
\draw (0.7,1.7) node {\Large $\mathcal{D}_{v+1}$};	
\draw (7.3,2.8) node {\Large $\mathcal{B}_{v+1}$};	
\draw (5.9,-2.8) node {\Large $\mathcal{J}$};	
\draw (6.5,0.9) node {\Large $\mathcal{J}_1$};	
\draw (5.2,-2.6) node {\Large $\mathcal{J}_2$};

\end{tikzpicture}
}
\caption{Illustration of the sets used in proving the induction step.}
\label{fig:diffusivity}
\end{center}
\end{figure}
First recall that Theorem \ref{thm:uniquelimitflow} can be applied to the perturbed dynamical network (\ref{pertdynsyst}) in order to prove existence of a globally attractive limit flow $\tilde f^*\in\cl(\mc F)$ for the perturbed dynamical network flow (\ref{pertdynsyst}). For brevity in notation, for every $1\le v< n$, put 
$$\lambda_v^{*}:=\sum_{e \in \mc E_v^+} \floweqe\,,\qquad\tilde\lambda_v^{*}:=\sum_{e \in \mc E_v^-} \tilde f_e\,,\qquad\lambda_v^{\max}:=\sum_{e\in\mc E^+_v}\tilde f_e^{\max}\,.$$
Also, for every node $v\in\mc V$, let
$$\mc D_v := \bigcup\nolimits_{u=0}^v \edgeset_{u}^+\,,\qquad\mc B_v:=\{(u,w)\in\mc E:\,0\le u\le v,\,v<w\le n\}$$
be, respectively, the set of all outgoing links, and the link-boundary of the node set $\{0,1,\ldots,v\}$. 
 
We shall prove the following through induction on $u=0,1,\ldots,n-1$:
\be
\label{eq:ind-stat}
\sum_{e \in \mc J} \left( \tilde f_e^* - \floweqe \right) \leq \sum_{e \in \mc D_u} \delta_e\,,\qquad \forall\mc J\subseteq\mc B_u\,.
\ee

First, notice that $\mc B_0=\mc D_0=\edgeset_{0}^+$. Since $$\sum_{e \in \mc E_0^+} \delta_e \leq \delta < R(\mc N,f^*) \leq \sum_{e \in \mc E_0^+} (\flowmaxe - \floweqe)\,,$$ we also have that $\lambda_0 < \tilde\lambda^{\max}_v$. Therefore, by using \eqref{diffusivity} of Lemma~\ref{lemmadiffusivity}, one can verify that \eqref{eq:ind-stat} holds true for $v=0$. 

Now, for some $v \le n-2$, assume that \eqref{eq:ind-stat} holds true for every $u \le v$. Consider a subset $\mc J \subseteq \mc B_{v+1}$ and let $\mc J_1:= \mc J \cap \edgeset_{v+1}^+$ and $\mc J_2 :=\mc J \setminus\mc  J_1$ (e.g., see Figure~\ref{fig:diffusivity}). 
By applying Lemma~\ref{lemmadiffusivity} to the set $\mc J_1$, one gets that
\begin{equation}\label{eq:lemma-application}
\sum\nolimits_{e \in \mc J_1} \left(\tilde f_e^*-\floweq_e \right) \le \left[\tilde\lambda^{*}_{v+1}-\lambda^{*}_{v+1}\right]_+ + \sum\nolimits_{e \in \edgeset_{v+1}^+} \delta_e, \quad \forall \, t \geq 0.
\end{equation}
It is easy to check that $\mc J_2 \subseteq\mc  B_v$ and $\mc E_{v+1}^- \subseteq \mc B_v$.
Therefore, using \eqref{eq:ind-stat} for the sets $\mc J_2$ and $\mc J_2 \union \edgeset_{v+1}^-$, one gets the following inequalities respectively: 
\begin{align}
\label{eq:induct-case1}
\sum\nolimits_{e \in\mc J_2} \left(\tilde f_e^*-\floweqe \right)  \leq & \sum\nolimits_{e \in \mc D_v} \delta_e, \\
\label{eq:induct-case2}
\sum\nolimits_{e \in\mc J_2} \left(\tilde f_e^*-\floweqe \right) + 
\sum\nolimits_{e \in \edgeset_{{v+1}}^-} \left(\tilde f_e^*-\floweqe \right)  \leq & 
\sum\nolimits_{e \in \mc D_v} \delta_e.
\end{align}
Consider the two cases: $\tilde\lambda^{*}_{v+1} \leq\lambda^{*}_{v+1}$, or $\tilde\lambda^{*}_{v+1} > \lambda^{*}_{v+1}$. 
By adding up \eqref{eq:lemma-application} and \eqref{eq:induct-case1}, in the first case, or \eqref{eq:lemma-application} and \eqref{eq:induct-case2} in the second case, one gets that 
$$
 \sum_{e \in\mc J} \left( \tilde f^*_e-\floweq_e \right)
=\sum_{e \in\mc J_1} \left( \tilde f^*_e-\floweq_e \right) + \sum_{e \in\mc J_2} \left( \tilde f_e^*-\floweq_e \right)
\leq\sum_{e \in \edgeset_{v+1}^+} \delta_e +  \sum_{e \in \mc D_v} \delta_e \leq\sum_{e \in \mc D_{v+1}} \delta_e\,.$$
This proves \eqref{eq:ind-stat} for node $v+1$ and hence the induction step. 

Fix $1\le v<n$. Since $\mc E_{v}^- \subseteq \mc B_{v-1}$, \eqref{eq:ind-stat} with $u=v-1$ implies that 
$$\tilde\lambda_{v}^*=\sum_{e \in \mc E_{v}^-} \tilde f^*_e 
\leq\sum_{e \in \mc E_{v}^-} \floweqe + \sum_{e \in \mc D_{v-1}} \delta_e 
=\sum_{e \in \mc E_{v}^+} \floweqe + \sum_{e \in \mc E} \delta_e - \sum_{e \in \mc E \setminus \mc D_{v-1}} \delta_e\,,$$
where the third step follows from the fact that $\sum_{e \in \mc E_{v}^-} \floweqe = \sum_{e \in \mc E_{v}^+} \floweqe$ by conservation of mass. 
Then, since $\mc E_{v}^+ \subseteq \mc E \setminus \mc D_{v-1}$, one gets that
$$
\ba{rcl}\tilde\lambda_{v}^*
&\leq &\ds \sum\nolimits_{e} \floweqe + \delta - \sum\nolimits_{e} \delta_e  \\[7pt]
&< & \ds\sum\nolimits_{e} \floweqe + R(\mc N,f^*) - \sum\nolimits_{e} \delta_e \\[7pt]
&\leq & \ds\sum\nolimits_{e} \floweqe + \sum\nolimits_{e} \left(\flowmaxe - \floweqe \right) - \sum\nolimits_{e} \delta_e \\[7pt]
&=&\ds \sum\nolimits_{e} \left(\flowmaxe - \delta_e \right) \\[7pt]
&=&\ds \sum\nolimits_{e} \newflowmaxe\,,
\ea$$
where the summation index $e$ runs over $\mc E^+_v$. Hence, it follows from \cite[Lemma 2]{PartI} applied to the perturbed local system (\ref{pertlocalsys}) that 
\be\label{xlemma4.2}\tilde f_e^*=\tilde f_e^*(\tilde\lambda_{v}^*)<\tilde f_e^{\max}\,,\qquad \forall e\in\mc E^+_{v}\,,\ee
for all $1\le v< n-1$. Moreover, since $\lambda_0=\sum_{e\in\mc E^+_v}f_e^*<\sum_{e\in\mc E^+_v}f_e^{\max}\,,$ applying \cite[Lemma 2]{PartI} again to the perturbed local system (\ref{pertlocalsys}) shows that (\ref{xlemma4.2}) holds true for $v=0$ as well. Hence, $$\tilde f_e^*<f_e^{\max}\,,\qquad \forall e\in\mc E\,,$$ so that the limit flow  $\tilde f^*$ belongs to $\mc F$, and hence it is necessarily an equilibrium flow of the perturbed dynamical flow network (\ref{pertdynsyst}), as argued in Remark \ref{remark:limitflowequilibrium}. Therefore, the dynamical flow network (\ref{pertdynsyst}) is fully transferring.  
\end{proof}\medskip

Theorem \ref{maintheo} now immediately follows from Lemma \ref{lemmainduction}, and the arbitrariness of the admissible perturbation of magnitude smaller than $R(\mc N,f^*)$.

 \bibliographystyle{ieeetr}%
  \bibliography{KS-transportation}

\end{document}